% !TEX TS-program = latex
\documentclass{llncs}
% \documentclass[a4paper,USenglish]{lipics-v2016}
%% LIPICS impose UTF8
% \usepackage[latin1]{inputenc} % o 
\usepackage[utf8]{inputenc}
\usepackage{amsmath,amsfonts,amssymb}
\usepackage{bbold,hyperref}
%----------------------------------------Paquetes para codigos ---------------------------------------%
  \usepackage{listings}             % Insertar codigos (FORTRAN u otro)

  \usepackage{algorithm}
  \usepackage{algorithmic}
      \floatname{algorithm}{Algoritmo}

%----------------------------------------Paquetes para graficos---------------------------------------%
    \usepackage{graphicx} % para incluir graficos
    \usepackage{float}        % para posiciones bien las figure con \begin{figure}[H]
    % \restylefloat{table}
    \usepackage{eso-pic}    % imagen de fondo
    \usepackage{sidecap}      % caption al lado de la imagen
    \usepackage{wrapfig}      % figuras rodeadas de texto
    \usepackage[compatibility=false]{caption}      % caption para subimagenes
    \usepackage{subcaption}
    \usepackage{tikz}
    \usepackage{tikz-3dplot}
    \usetikzlibrary{arrows.meta}
    \usetikzlibrary{decorations.markings}

%     \epstopdfsetup{update}
%-----------------------------------------------------------------------------------------------------%
 
%----------------------------------------Paquetes varios ---------------------------------------------%   
    \usepackage{color}          % permite escribir con colores.
    \usepackage{multicol}       % permite poner columnas en las enumeraciones.
    \usepackage{enumerate}      % permite cambiar al enumeracion asi: \begin{enumerate}[a)].
%-----------------------------------------------------------------------------------------------------%
\usepackage{placeins}
\newcommand{\ldarrow}{\leftarrow\!\!\!\!\downarrow}

\newcommand{\MNpic}[9]{
\raisebox{-3.0 ex}{
\begin{tikzpicture}[thick,scale=0.4, every node/.style={scale=1.0}]
\draw  (-3,5) rectangle (0,2);
\draw [help lines, step=1cm] (-3,2) grid (0,5);
\node at (-2.5,4.5) {#1};
\node at (-1.5,4.5) {#2};
\node at (-0.5,4.5) {#3};
\node at (-2.5,3.5) {#4};
\node at (-1.5,3.5) {#5};
\node at (-0.5,3.5) {#6};
\node at (-2.5,2.5) {#7};
\node at (-1.5,2.5) {#8};
\node at (-0.5,2.5) {#9};
\end{tikzpicture}
}}
\newcommand{\MNpicf}[9]{
\raisebox{-2.750 ex}{
\begin{tikzpicture}[thick,scale=0.3, every node/.style={scale=0.85}]
\draw  (-3,5) rectangle (0,2);
\draw [help lines, step=1cm] (-3,2) grid (0,5);
\node at (-2.5,4.5) {#1};
\node at (-1.5,4.5) {#2};
\node at (-0.5,4.5) {#3};
\node at (-2.5,3.5) {#4};
\node at (-1.5,3.5) {#5};
\node at (-0.5,3.5) {#6};
\node at (-2.5,2.5) {#7};
\node at (-1.5,2.5) {#8};
\node at (-0.5,2.5) {#9};
\end{tikzpicture}
}}

\newcommand{\VNpic}[5]{
\raisebox{-3.0 ex}{
\begin{tikzpicture}[thick,scale=0.4, every node/.style={scale=1.0}]
\draw  (-1.5,3.5) rectangle (-0.5,0.5);
\draw  (-2.5,2.5) rectangle (0.5,1.5);
\node at (-1,3) {#2};
\node at (-1,2) {#1};
\node at (-1,1) {#4};
\node at (-2,2) {#5};
\node at (0,2)  {#3};
\end{tikzpicture}
}}
\newcommand{\LNpic}[3]{
\raisebox{-2.0 ex}{
\begin{tikzpicture}[thick,scale=0.5, every node/.style={scale=0.75}]
\draw  (-1.5,3.5) rectangle (-0.5,1.5);
\draw  (-1.5,2.5) rectangle (0.5,1.5);
\node at (-1,2) {$#1$};
\node at (-1,3) {$#2$};
\node at (0,2)  {$#3$};
\end{tikzpicture}
}}
\newcommand{\rbsquare}{
\raisebox{-0 ex}{
\begin{tikzpicture}[thick,scale=0.475, every node/.style={scale=0.75}]
\fill (-4.5,6.5) -- (-4.5,7) -- (-4,7) -- (-4.5,6.5);
\fill[red] (-4,6.5) -- (-4,7) -- (-4.5,6.5) -- (-4,6.5);
\end{tikzpicture}
}}

\newcommand{\cubo}[3]{
    \draw[red,fill=yellow] (#1,#2,#3) -- ++(-\cubex,0,0) -- ++(0,-\cubey,0) -- ++(\cubex,0,0) -- cycle;
    \draw[red,fill=yellow] (#1,#2,#3) -- ++(0,0,-\cubez) -- ++(0,-\cubey,0) -- ++(0,0,\cubez) -- cycle;
    \draw[red,fill=yellow] (#1,#2,#3) -- ++(-\cubex,0,0) -- ++(0,0,-\cubez) -- ++(\cubex,0,0) -- cycle;
}

\newcounter{TODOcounter}

\newcommand\Z{\mathbb{Z}}
\newcommand{\FCA}{\ensuremath{\mathbf{FCA}}}
\newcommand{\VN}{\ensuremath{\mathrm{VN}}}
\newcommand{\MN}{\ensuremath{\mathrm{MN}}}
\newcommand{\LN}{\ensuremath{\mathrm{LN}}}

\newcommand\myparagraph[1]{\par\textbf{#1.}}

\title{Universality in Freezing Cellular Automata}

\newcommand\LIFO{Universit\'{e} Orl\'{e}ans, LIFO EA 4022, FR-45067 Orl\'{e}ans, France}
\author{Florent Becker\inst{1}\and Diego Maldonado\inst{1}\and Nicolas Ollinger\inst{1}\and Guillaume Theyssier\inst{2}}
% \affil[1]{\LIFO}
% \affil[2]{\LIFO}
% \affil[3]{\LIFO}
\institute{\LIFO\and Institut de Mathématiques de Marseille (Université Aix Marseille, CNRS, Centrale Marseille), France, \email{guillaume.theyssier@cnrs.fr}}

\authorrunning{F. Becker, D. Maldonado, N. Ollinger and G. Theyssier}
% \Copyright{Florent Becker, Diego Maldonado, Nicolas Ollinger and Guillaume Theyssier}
% \subjclass{F.1.1 Models of Computation, F.1.2 Modes of Computation}
% \keywords{Freezing cellular automata, monotone cellular automata, context-sensitive simulations, intrinsinc universality}

\begin{document}
\maketitle

\begin{abstract}
  Cellular Automata have been used since their introduction as a discrete tool of modelization. In many of the physical processes one may modelize thus (such as bootstrap percolation, forest fire or epidemic propagation models, life without death, etc), each local change is irreversible. The class of \emph{freezing} Cellular Automata (FCA) captures this feature. In a freezing cellular automaton the states are ordered and the cells can only decrease their state according to this ``freezing-order''.

  We investigate the dynamics of such systems through the questions of simulation and universality in this class: is there a Freezing Cellular Automaton (FCA) that is able to simulate any Freezing Cellular Automata, i.e. an intrinsically universal FCA? We show that the answer to that question is sensitive to both the number of changes cells are allowed to make, and geometric features of the space. In dimension 1, there is no universal FCA. In dimension 2, if either the number of changes is at least 2, or the neighborhood is Moore, then there are universal FCA. On the other hand, there is no universal FCA with one change and Von Neumann neighborhood. We also show that monotonicity of the local rule with respect to the freezing-order (a common feature of bootstrap percolation) is also an obstacle to universality. % Lastly, we characterize which cellular automata can be simulated by a universal FCA, through a notion of local decreasing energy.
\end{abstract}
%\noindent

\section{Introduction}
Cellular automata (CA for short) are discrete dynamical systems at the crossroad of several reasearch fields and points of view (dynamical systems theory, computer science, physical modeling, etc). In the pioneering works impulsed by J. von Neumann and S. Ulam in the 50-60s, when cellular automata were formally defined for the first time, two important themes were already present: universality \cite{vonneumann,holland,thatcher} and growth dynamics \cite{ulam}. Since then, these themes have received a considerable attention in the literature. Concerning universality, production of examples \cite{banks,OllingerR11} was accompanied by progresses on the formalization and the theoretical analysis of the concept \cite{OllingerUniv}, in particular with the emergence of intrinsic simulations and universality \cite{bulking1,bulking2}. Growing dynamics in cellular automata were also much studied, mostly through (classes of) examples with different points of view \cite{Gravner98,GriMoo96,Fuentes,bollobas2015}. More recently, substantial work have been published on models of self-assembly tilings, most of which can be seen as a particular non-deterministic 2D CA where structures grow from a seed. Interestingly, the question of intrinsic universality was particularly studied in that case \cite{DotyLPSSW12,MeunierPSTWW14}.

A common feature of all these examples is that only a bounded number of changes per cell can occur during the evolution. To our knowledge, the first time that the class of CAs with that feature was considered as a whole is in \cite{vollmar81} with a point of view of language recognition. More recently the notion of \emph{freezing} CA was introduced in \cite{GolOlThey15}, which captures essentially the same idea with an explicit order on states, and a systematic study of this class (dynamics, predictability, complexity) was started. In particular it was established that the class is Turing universal (even in dimension 1).

In this paper, we study intrinsic universality in freezing CA as a first step to understand universality in growth dynamics in general. Our central result is the construction of such intrinsically universal freezing CA: it shows that the class of freezing CA is a natural computational model with maximally complex elements which can be thought of as machines that can be `programmed' to produce any behavior of the class. Moreover, the universal CA that we construct are surprisingly small (5 states, see Section~\ref{sec:2d_2change_VN}) which is in strong contrast with the complicated construction known to obtain intrinsic universality for the classical self-assembly aTAM model~\cite{DotyLPSSW12}. Our contribution also lays in the negative results we prove (Theorems~\ref{teo:1Dsim}, \ref{teo: nfu1D} and \ref{teo: nfum}): interpreting them as necessary conditions to achieve universality for freezing CA, we obtain a clear landscape of the fundamental computational or dynamical features of this class. 

The paper is organized as follows. In Section~\ref{sec:defs} we define the main concepts (freezing CA, and intrinsic simulation) and prove that the use of context-free simulation cannot lead to universality. Section~\ref{sec:UFCA} gives a general construction scheme to obtain universal freezing CA giving three positive results in three different settings depending on the dimension, the neighborhood and the maximum number of state changes per cell. In section~\ref{sec:obstacles}, we show several obstacles to the existence of universal freezing CA: dimension $1$, 1 change per cell with von Neumann neighborhood in 2D, and monotonicity. % Finally in section~\ref{sec:localenergy} we characterize the set of CA that can be simulated by a freezing CA, through a notion of local decreasing energy.

\section{Definitions}
\label{sec:defs}
%\subsection{Freezing Cellular Automata}

\newcommand\cadef{\mathcal{CA}}
\begin{definition}
  A \emph{cellular automaton} $F$ of dimension $d$ and state set $Q$ is a tuple $F = (d, Q, N, f)$, where $d$, its \emph{dimension} is an integer, $Q$, its set of \emph{states} is a finite set, $N \subset \mathbb{Z}^d$ is its finite \emph{neighborhood}, and $f: Q^N \rightarrow Q$ is its \emph{local function}.

  It induces a global function, which we also note $F$, acting on the
set of configurations $Q^{\mathbb{Z}^{d}}$ as follows:
$$
\forall c\in Q^{\mathbb{Z}^{d}},\forall z\in \mathbb{Z}^{d},\ F(c)_{z}=f(c|_{z+N})
$$

\end{definition}

Let $e_1, \ldots, e_d$ be the canonical basis of $\Z^d$; $\VN_d=\{\vec{0},e_1, \ldots, e_d\}$ is the \emph{von Neumann neighborhood}. We also use the following neighborhoods in dimension 2: $\MN=\{(0,0),(\pm1,0),(0,\pm1),(\pm1,\pm1),(\mp1,\pm1)\}$ is the \emph{Moore Neighborhood}; $\LN=\{(0,0),(1,0),(0,1)\}$ is the \emph{L-neighborhood}. 

In many cellular automata from the literature, there is a global bound on the number of times a cell can change: they are \emph{bounded-change} cellular automata. This property is found in most of the Cellular Automata considered in bootsrap percolation, as well as in other well-known examples such as 'Life without death' \cite{GriMoo96} and various models of propagation phenomena like in~\cite{Fuentes}. We say that a CA is \emph{$k$-change} if any cell in any orbit changes at most $k$ times of state.

 Moreover, in all those examples, the bound is defined through an explicit order on states. Such an automaton is a \emph{freezing cellular automaton}; they were introduced in \cite{GolOlThey15}. This \emph{freezing-order} on state can also be used to define interesting subclasses (see Section~\ref{sec:monotone}).

\begin{definition}[Freezing Cellular Automaton]
  A CA $F$  is a \emph{$\prec$-freezing} CA, for some (partial) order $\prec$ on states, if $F(c)_{z}\prec c_{z}$ for any configuration $c$  and any cell $z$. A CA is \emph{freezing} if it is $\prec$-freezing for some order.
\end{definition}
 Any freezing CA is $k$-change for some $k$ (at most the depth of its freezing order, but possibly less). For $V \subset \Z^d$, we note $\FCA_V$ for the class of $d$-dimensional freezing cellular automata with neighborhood $V$. Finally, we set $\FCA^d = \bigcup_{V \subset \Z^d} \FCA_V$ and omit $d$ when context makes it clear.

%\myparagraph{Simulations and universality} %
% Let us recall the notion of intrinsic simulation defined in \cite{bulking1,bulking2} as 'injective simulation'.

Intrinsic universality is defined through a notion of simulation between CA.
 Roughly speaking, $F$ simulates $G$ if there is an encoding of configurations of $G$ into configurations of $F$ such that one step of $G$ is simulated through this encoding by a fixed number of steps of $F$.

\begin{definition}[Simulation] Let ${T>0}$, and $B\subseteq\Z^d$ be a $d$-dimensional rectangular block, with size-vector $b \in \mathbb{Z}^d$. Let $C \subset \mathbb{Z}^d$ be a finite set, with $\vec0 \in C$.
  Let ${F=(d, Q, N, f)}$ and ${G=(d,Q', N' ,g)}$ be two $d$-dimensional cellular automata. $F$ \emph{simulates} $G$ with slowdown $T$, block $B$ and context $C$ if there is a coding map ${\phi:Q_G^C\rightarrow Q_F^B}$ such that the global map ${\bar\phi : Q_G^{\Z^d}\rightarrow Q_F^{\Z^d}}$ verifies:
  \begin{itemize}
  \item ${\bar\phi}$ is injective;
  \item \(\forall c \in Q_G^{\mathbb{Z^d}}: \bar{\phi}(G(c)) = F^T(\bar{\phi}(c))\).    
  \end{itemize}
   where $\bar{\phi}$ is defined by: for $z \in \mathbb{Z}^d, r \in B$, $\overline{\phi}(c)_{bz + r}=\phi(c|_{z+C})_{r}$ 
 \label{def:contextsensitivesimu}
\end{definition}

When $C = \{ \vec0 \}$, this definition corresponds with the classical notion of 'injective simulation', as in \cite{bulking1,bulking2} and we call it \emph{context-free}.

\begin{figure}[hb]
    \begin{subfigure}[b]{0.475\textwidth}
        \centering
        % !TEX root = ../UFCA.tex
\begin{tikzpicture}[thick,scale=0.4, every node/.style={scale=0.45}]

\draw[thick]  (-3.5,3.5) rectangle (-1.5,-2.5);
\draw[thick]  (-1.5,3.5) rectangle (0.5,-2.5);
\draw[thick]  (0.5,3.5) rectangle (2.5,-2.5);
\draw[thick]  (-3.5,1.5) rectangle (2.5,3.5);
\draw[thick]  (-3.5,-0.5) rectangle (2.5,1.5);

\node[style={scale=1.5}] at (-2.5,2.5) {$A$};
\node[style={scale=1.5}] at (-0.5,2.5) {$B$};
\node[style={scale=1.5}] at (1.5,2.5) {$C$};
\node[style={scale=1.5}] at (-2.5,0.5) {$D$};
\node[style={scale=1.5}] at (-0.5,0.5) {$E$};
\node[style={scale=1.5}] at (1.5,0.5) {$F$};
\node[style={scale=1.5}] at (-2.5,-1.5) {$G$};
\node[style={scale=1.5}] at (-0.5,-1.5) {$H$};
\node[style={scale=1.5}] at (1.5,-1.5) {$I$};

\draw  (-10,2) rectangle (-7,1);
\draw  (-10,1) rectangle (-7,0);
\draw  (-10,0) rectangle (-7,-1);
\draw  (-9,2) rectangle (-10,-1);
\draw  (-8,2) rectangle (-9,-1);

\node[style={scale=1.5}] at (-9.5,1.5) {$a$};
\node[style={scale=1.5}] at (-8.5,1.5) {$b$};
\node[style={scale=1.5}] at (-7.5,1.5) {$c$};
\node[style={scale=1.5}] at (-9.5,0.5) {$d$};
\node[style={scale=1.5}] at (-8.5,0.5) {$e$};
\node[style={scale=1.5}] at (-7.5,0.5) {$f$};
\node[style={scale=1.5}] at (-9.5,-0.5) {$g$};
\node[style={scale=1.5}] at (-8.5,-0.5) {$h$};
\node[style={scale=1.5}] at (-7.5,-0.5) {$i$};

\draw[very thick,->] (-6.5,0.5) -- (-4,0.5);
 
\draw[very thick,red] (-1.5,1.5) rectangle (0.5,-0.5);
\node[style={scale=1.65}] at (-5.25,1) {$\overline{\phi}$};
\draw[very thick,->,red]  plot[smooth, tension=.7] coordinates {(-8.5,0.75) (-5,2.5) (-0.5,0.75)};
\node[style={scale=1.65}] at (-5.25,3) {${\phi}$};
\draw[very thick,red]  (-9,1) rectangle (-8,0);
\end{tikzpicture}
        \caption{Context-free simulation, $C = \{\vec0\}$.}
    \end{subfigure}%
    \begin{subfigure}[b]{0.475\textwidth}
        \centering
        % !TEX root = ../UFCA.tex
\begin{tikzpicture}[thick,scale=0.4, every node/.style={scale=0.45}]

\draw[thick]  (-3.5,3.5) rectangle (-1.5,-2.5);
\draw[thick]  (-1.5,3.5) rectangle (0.5,-2.5);
\draw[thick]  (0.5,3.5) rectangle (2.5,-2.5);
\draw[thick]  (-3.5,1.5) rectangle (2.5,3.5);
\draw[thick]  (-3.5,-0.5) rectangle (2.5,1.5);
\draw (-3.5,2.5) -- (-2.5,3.5);
\draw (-3.5,0.5) -- (-0.5,3.5);
\draw (-3.5,-1.5) -- (1.5,3.5);

\draw (1.5,-2.5) -- (2.5,-1.5);
\draw (-3,3) -- (-3.5,3.5);
\draw (-2,2) -- (-1,1);
\draw (0,0) -- (1,-1);
\draw (2,-2) -- (2.5,-2.5);
\draw (-3,1) -- (-3.5,1.5);
\draw (-2,0) -- (-1,-1);
\draw (0,2) -- (1,1);
\draw (-1,3) -- (-1.5,3.5);
\draw (2,2) -- (2.5,1.5);
\draw (2.5,2.5) -- (-2.5,-2.5);
\draw (2.5,0.5) -- (-0.5,-2.5);
\draw (0,-2) -- (0.5,-2.5);
\draw (2,0) -- (2.5,-0.5);
\draw (1,3) -- (0.5,3.5);
\draw (-3,-1) -- (-3.5,-0.5);
\draw (-2,-2) -- (-1.5,-2.5);

\node[style={scale=1.5}] at (-2.5,2.5) {$A$};
\node[style={scale=1.5}] at (-0.5,2.5) {$B$};
\node[style={scale=1.5}] at (1.5,2.5) {$C$};
\node[style={scale=1.5}] at (-2.5,0.5) {$D$};
\node[style={scale=1.5}] at (-0.5,0.5) {$E$};
\node[style={scale=1.5}] at (1.5,0.5) {$F$};
\node[style={scale=1.5}] at (-2.5,-1.5) {$G$};
\node[style={scale=1.5}] at (-0.5,-1.5) {$H$};
\node[style={scale=1.5}] at (1.5,-1.5) {$I$};

\node at (-1.25,3) {$a$};
\node at (-3,1.25) {$a$};

\node at (0.75,3) {$b$};
\node at (-1,1.25) {$b$};
\node at (-1.75,2) {$b$};

\node at (0.25,2) {$c$};
\node at (1,1.25) {$c$};

\node at (-2,1.75) {$d$};
\node at (-1.25,1) {$d$};
\node at (-3,-0.75) {$d$};

\node at (0,1.75) {$e$};
\node at (0.75,1) {$e$};
\node at (-1.75,0) {$e$};
\node at (-1,-0.75) {$e$};

\node at (2,1.75) {$f$};
\node at (0.25,0) {$f$};
\node at (1,-0.75) {$f$};

\node at (-2,-0.25) {$g$};
\node at (-1.25,-1) {$g$};

\node at (0,-0.25) {$h$};
\node at (0.75,-1) {$h$};
\node at (-1.75,-2) {$h$};

\node at (2,-0.25) {$i$};
\node at (0.25,-2) {$i$};
\draw  (-10,2) rectangle (-7,1);
\draw  (-10,1) rectangle (-7,0);
\draw  (-10,0) rectangle (-7,-1);
\draw  (-9,2) rectangle (-10,-1);
\draw  (-8,2) rectangle (-9,-1);

\node[style={scale=1.5}] at (-9.5,1.5) {$a$};
\node[style={scale=1.5}] at (-8.5,1.5) {$b$};
\node[style={scale=1.5}] at (-7.5,1.5) {$c$};
\node[style={scale=1.5}] at (-9.5,0.5) {$d$};
\node[style={scale=1.5}] at (-8.5,0.5) {$e$};
\node[style={scale=1.5}] at (-7.5,0.5) {$f$};
\node[style={scale=1.5}] at (-9.5,-0.5) {$g$};
\node[style={scale=1.5}] at (-8.5,-0.5) {$h$};
\node[style={scale=1.5}] at (-7.5,-0.5) {$i$};

\draw[very thick,->] (-6.5,0.5) -- (-4,0.5);
\draw[very thick,red] (-9,2) -- (-9,1) -- (-10,1) -- (-10,0) -- (-9,0) -- (-9,-1) -- (-8,-1) -- (-8,0) -- (-7,0) -- (-7,1) -- (-8,1) -- (-8,2) -- (-9,2);
\draw[very thick,red] (-1.5,1.5) rectangle (0.5,-0.5);
\node[style={scale=1.65}] at (-5.25,1) {$\overline{\phi}$};
\draw[very thick,->,red]  plot[smooth, tension=.7] coordinates {(-8.5,0.75) (-5,2.5) (-0.5,0.75)};
\node[style={scale=1.65}] at (-5.25,3) {${\phi}$};
\end{tikzpicture}
        \caption{Context-sensitive simulation with context $C = \VN_2$.}
    \end{subfigure}%
    \caption{Coding of states into blocks for the two modes of simulation}
\end{figure}
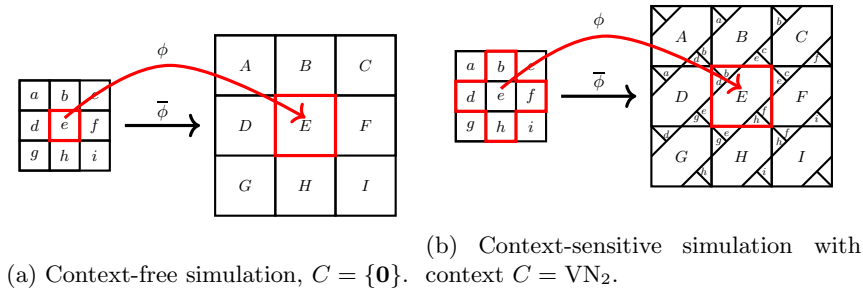

 In the context of freezing CAs, context-free universality is prevented by the irreversibility of any computation performed by a putative $U$ combined with the injectivity of the coding map, as witnessed by the following theorem. 

% You can prefer proposition rather than theorem, but Lemma is not good for me because it's not used as a tool in a proof later.
\begin{theorem}[No freezing context-free universality]
  \label{teo: context sim}
  Let $d \in \mathbb{N}$, there is no $F \in \FCA^d$ which is context-free $\FCA_{\VN_d}$-universal.
\end{theorem}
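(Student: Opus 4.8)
The plan is to assume, towards a contradiction, that some $U=(d,Q_U,N,f)\in\FCA^d$ is context-free $\FCA_{\VN_d}$-universal, and to exploit that, being freezing, $U$ is $k$-change for a fixed $k$. The witnesses I would use are the ``countdown'' automata: for each $n\ge1$ let $G_n=(d,\{0,\dots,n\},\VN_d,g_n)$ with $g_n(x)=\max(x_{\vec0}-1,0)$, so $g_n$ ignores the proper neighbours. Each $G_n$ is $\prec$-freezing for the natural order, hence $G_n\in\FCA_{\VN_d}$, and by universality there are $T_n>0$, a block $B_n$ and an injective coding $\phi_n:\{0,\dots,n\}\to Q_U^{B_n}$ with $\overline{\phi_n}\circ G_n=U^{T_n}\circ\overline{\phi_n}$.

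The first step is to read off the structure of $\phi_n$ from freezing together with injectivity of $\overline{\phi_n}$ --- this is where the statement's slogan bites. Evaluating $\overline{\phi_n}$ on uniform configurations and using injectivity shows $\phi_n$ is itself injective; evaluating the simulation identity along the $G_n$-orbit $\overline n\to\overline{n-1}\to\cdots\to\overline0$ of the uniform configuration $\overline n$ shows that $\overline{\phi_n}(\overline i)$ is the $B_n$-periodic configuration with fundamental pattern $\phi_n(i)$ and that $U^{T_n}$ carries it to $\overline{\phi_n}(\overline{i-1})$. Since $U$ is freezing, along this $U$-orbit every cell of the fundamental domain only moves down in the state order, so $\phi_n(n)\succ\phi_n(n-1)\succ\cdots\succ\phi_n(0)$ is a strictly decreasing chain in $Q_U^{B_n}$ for the coordinatewise order; as each of the $|B_n|$ coordinates can strictly decrease at most $k$ times, this chain has length at most $k|B_n|+1$, whence $n\le k|B_n|$ and $|B_n|\to\infty$. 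A companion observation in the same vein: the $B_n$-periodic configuration $\overline{\phi_n}(\overline n)$ changes on its fundamental domain at most $k|B_n|$ times and freezes immediately after its last change, while at time $(n-1)T_n$ it equals $\overline{\phi_n}(\overline1)\ne\overline{\phi_n}(\overline0)$ and so is not yet frozen; hence also $(n-1)T_n<k|B_n|$.

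The second step, and the crux, is to contradict $|B_n|\to\infty$ by bounding the block size using that $U$ has a fixed finite neighbourhood. Since $g_n$ ignores the proper neighbours, $U^{T_n}$ restricted to each block $b_n z+B_n$ of $\overline{\phi_n}(c)$ must reproduce $\phi_n(\max(c_z-1,0))$, a function of $c_z$ alone; yet $U^{T_n}$ at a cell inspects only a window of radius $\rho T_n$, where $\rho$ is the radius of $N$. Pushing this through --- every coordinate of the output pattern must be recoverable inside a radius-$\rho T_n$ window of the corresponding cell, so that the informative coordinates of the decreasing coding chain cannot be arbitrarily spread out relative to $\rho T_n$ --- I expect a bound of the form $|B_n|\le c(U)\cdot T_n$, which combined with $(n-1)T_n<k|B_n|$ forces $T_n$ below a constant for $n$ large, contradicting $T_n\ge1$. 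The genuinely delicate point is making this light-cone argument tight: the naive version (decode the whole block from each of its two extreme cells) is too weak, because distinct output cells may decode from distinct sub-windows, so one must reason more globally about how the strictly decreasing coding chain can be laid out across a block under the twin constraints that $U$ is $k$-bounded and local. That is the heart of the theorem; everything else is bookkeeping.
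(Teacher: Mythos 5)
Your Step 1 is sound, but the crux of your argument (Step 2) is not merely unfinished: the bound you hope for is false, and in fact your whole choice of witnesses cannot work. The countdown automata $G_n$ ignore their neighbours, and precisely for that reason a \emph{single fixed} freezing CA simulates all of them context-freely. Take the two-state CA $U_0$ with order $0\prec 1$ and rule ``a $1$ whose neighbour in direction $e_1$ is $0$ becomes $0$; otherwise nothing changes'', and encode state $i\in\{0,\dots,n\}$ by the block $1^i0^{\,n+1-i}$ of length $n+1$ along $e_1$ (unary with one spare $0$, so a block never ends in $1$). In any encoded configuration the only $1$s with a $0$ to their east are the last $1$ of each block, so one step of $U_0$ deletes exactly that $1$ in every block, i.e.\ $U_0(\bar\phi_n(c))=\bar\phi_n(G_n(c))$ with $T_n=1$ and $|B_n|=n+1$, and $\bar\phi_n$ is injective. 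This satisfies all the constraints you derive ($n\le k|B_n|$, $(n-1)T_n<k|B_n|$) and violates $|B_n|\le c(U)\,T_n$, so no contradiction can be extracted from the countdown family alone: the light-cone tension you were counting on simply does not exist when the simulated rule carries no information between cells.

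What is needed is a witness family whose transitions genuinely depend on the neighbourhood, so that a context-free code (which fixes a block from its own cell only) cannot tell the simulator whether the block must change. The paper's witnesses are the CAs $F_n$ on states $\{-1,\dots,-n\}$ where a cell drops to $r$ only if \emph{all} its von Neumann neighbours are already in state $r$. Unboundedness of $Q_n$ forces the blocks $B_n$ to be wide compared to the radius $r_u$ of the putative universal $F_u$. One then encodes a $-1$ surrounded by $-2$s: the surrounding blocks can never change (freezing plus injectivity of $\phi_n$ forbids any transient excursion and return), so the central block must change at the very first step, at some cell $\vec i$. But because $F_u(\cdot)_{\vec i}$ only sees a radius-$r_u$ window inside a wide block, one can build another \emph{valid encoding} $y=\bar\phi_n(y')$ agreeing with the first configuration on that window, where $y'$ is a fixed point of $F_n$ (every cell of $y'$ sees both a $-1$ and a $-2$, so no unanimity ever occurs); freezing plus injectivity forces $y$ to be a fixed point of $F_u$, contradicting $F_u(y)_{\vec i}\neq y_{\vec i}$. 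So the missing idea in your proposal is exactly this interplay between neighbourhood-dependence of the simulated rule, the context-freeness of the code, and the impossibility of transient changes under freezing; your global bookkeeping on chain lengths and fixation times, while correct, cannot substitute for it.
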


Context-sensitive simulation can get us over this hurdle as we show below; it is akin to the notion of conjugacy in symbolic dynamics \cite{LindMarcus}.
%%% Local Variables:
%%% mode: latex
%%% TeX-master: "UFCA"
%%% End:

\section{Constructing Intrinsically Universal FCA}
\label{sec:UFCA}

We give a number of constructions for intrinsically universal freezing cellular automata. All of these exhibit the same running theme: if there is a means of crossing information asynchronously, then universality can be reached. This insight yields three constructions which are concrete implementations under various technical constraints of a common abstract construction.
\label{sec:abstract-sim}
The abstract construction can be described by: the structure of macro-cells, the mechanism to trigger state change in each macro-cell, and the wiring between neighboring macro-cells to ensure communication. At this abstract level we assume that there is a mean to cross wires without interference.
Another aspect of wiring is the necessity to put delays on some wires in order to keep synchronicity of information: it is a standard aspect of circuit encoding in CAs \cite{banks,OllingerUniv},  which we won't address in detail here but which can be dealt with by having wires make zigzag to adjust their length as desired.
The freezing condition imposes strong restrictions on the way we can code, transport and process information. We focus below on where our construction differs from the classical approach in general CAs.

\myparagraph{Wires are Fuses}
It is not possible to implement classical wires where bits of information travel freely without violating the freezing condition. In all of our constructions wires are actually fuses that can be used only once and they are usually implemented with two states: $1$ stays stable without presence of neighboring $0$s and $0$ propagates over neighboring $1$s. With that behavior our wires can be trees connecting various positions in such a way that a $0$ appearing at any position is broadcasted to the whole tree. A finite wire can either be uniformly in state ${b\in\{0,1\}}$ in which case all leaves 'agree' on the bit of information transported by it, or not uniform in which case information is incoherent between leaves. As it will become clear later, our constructions will use wires between adjacent blocks (or macro-cells) in the simulator CA and our encodings require that those wires are in a coherent state (uniformly ${b\in\{0,1\}}$): it is precisely in this aspect that we use the power of context sensitive simulations, because the content of a block (or macro-cell) cannot be fixed independently of its neighbors in that case. % It might seem at first to be a technical side-effect of our particular choice of construction, but we stress again that universality is impossible with context-free simulation (Theorem~\ref{teo: context sim}). However our constructions are such that wires are the only context-sensitive aspects in building blocks (or macro-cells) in the simulator CA. 

\myparagraph{State Codification}
In each macro-cell we must code in some way a (possibly very big) state that can change a (possibly very big) number of times: a classical binary encoding would violate the freezing condition so we actually use a unary coding. Given a finite set $S$ and a quasi order $(Q,\preceq)$, let $q_0 \preceq \ldots \preceq q_{|Q|-1}$ be a linearization of $\preceq$, and let $\iota(q_i) = i$.
Then let $Q_u = 1^*0^+ \cap \{0,1\}^{|Q|}$, and $\phi \in Q \rightarrow Q_u: q \mapsto 1^{\iota(q)}0^{|Q|-\iota(q)}$.
Note that for any $i < |Q|$ we have ${q \preceq q'\Leftrightarrow\phi(q)_i \leq \phi(q')_i}$ (where $\leq$ is the lexicographic order).
Since $\phi$ is a bijection, for any cellular automaton $F$ with state set $Q$, $\bar{\phi} \circ F \circ \bar{\phi}^{-1}$ is a cellular automaton isomorphic to $F$, with state set $Q_u$, which we call the \emph{unary representation} of $F$.
If $F$ is $\preceq$-freezing, then its unary representation is $\leq$-freezing.
We will use this unary encoding everywhere in the structure of our macro-cells: each state of a simulated CA $F$ will be represented by a collection of wires representing the bits of an element of $Q_u$ defined above. This encoding is coherent with the freezing property of the simulated CA because the fact that states can only decrease corresponds to the fact that the number of wires uniformly equal to $0$ increases.
% In order to simulate an arbitrary FCA, we will use this  unary representation, using logic circuits to compute the local function in each block.

\myparagraph{Neighborhood Matchers}
%By the fractional notation we can write any FCA using fractional notation. 
The fundamental basic block of our construction is a circuit that detects a fixed pattern in the neighborhood and outputs a bit of information saying: ``given this particular neighborhood pattern $w$, the new state of the macro-cell must be smaller than $l$''. Our unary encoding is adapted for this because the predicate ``smaller than $l$'' for a state translates into a condition on a single bit of an element of $Q_u$, that is to say a single wire in our concrete representation of states.
Without loss of generality we assume that $F=(\mathbb{Z}^2,Q_u,f,N)$ is a FCA with state in unary representation.
Take $L = |Q_u|$, and $m = |N|$.
For $l \in Q_u$, let $\{w_1^l,...,w_{K_l}^l\} = \bigcup_{s' \leq l}f^{-1}(s') = \{n \in (Q_u)^{N} | f(n) \leq l\}$. Take, for some state $l$, $w_k^l=({q}_1,...,{q}_m)\in \bigcup_{s' \leq l}f^{-1}(s')$ a fixed neighborhood with output smaller than $l$; each state $q_i$ is in $\{0,1\}^L$, so $w_k^l$ is a binary word in $\{0,1\}^{mL}$.
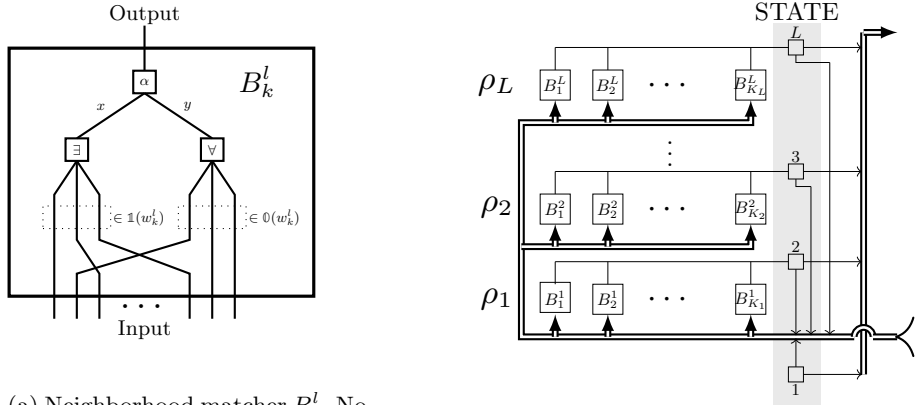
\begin{figure}[h]
  \begin{subfigure}{.4\textwidth}
    % !TEX root = ../UFCA.tex
\begin{tikzpicture}[thick,scale=0.3, every node/.style={scale=0.6}]

\draw[thick]  (-3.5,2.5) rectangle (-2.5,1.5);
\draw[thick]  (2.5,2.5) rectangle (3.5,1.5);
\draw (-4,-5.5) -- (-4,-3.5) -- (-4,-2) -- (-4,0) -- (-3,1.5);
\draw (-3,-5.5) -- (-3,-3.5) -- (2,-2)-- (2,0) -- (3,1.5);
\draw (-2,-5.5) -- (-2,-3.5) -- (-3,-2) -- (-3,0) -- (-3,1.5);
\draw (2,-5.5) -- (2,-3.5) -- (-2,-2) -- (-2,0) -- (-3,1.5);
\draw (3,-5.5) --(3,-3.5) -- (3,-2) -- (3,0) -- (3,1.5);
\draw (4,-5.5) -- (4,-3.5) -- (4,-2) -- (4,0) -- (3,1.5);
\draw[thick]  (-0.5,5.5) rectangle (0.5,4.5);
% \draw[->] (0,5.8)--(1.2,5.8)  --(1.2,5) node[midway,right] {$z$} --(0.5,5);
\draw (-3,2.5)  -- node[midway,above left] {$x$} (0,4.5);
\draw (3,2.5)  -- node[midway,above right] {$y$} (0,4.5);
\draw (0,5.5) -- (0,7.5);
\node at (-3,2) {$\exists$};
\node at (3,2) {$\forall$};
\node at (0,5) {$\alpha$};
\node[style={scale=1.5}] at (0,8) {Output};
\node[style={scale=2.5}] at (0,-5) {$\cdots$};
\draw[thin,dotted]  (-4.5,-0.5) rectangle (-1.5,-1.5);
\node at (-0.25,-1) {$\in \mathbb{1}(w_k^l)$};
\draw[thin,dotted]  (1.5,-0.5) rectangle (4.5,-1.5);
\node at (5.75,-1) {$\in \mathbb{0}(w_k^l)$};
\node[style={scale=1.5}]  at (0,-6) {Input};
\draw[very thick]  (-6,6.5) rectangle (7.5,-4.5);
\node[style={scale=2}] at (5,5) {$B_k^l$};
\end{tikzpicture}
    \caption{Neighborhood matcher $B_k^l$. Notations $\mathbb{0}(w) $ and $\mathbb{1}(w)$ stand for the set of all indexes $i$ s.t. $w_i=0$ and $w_i=1$ respectively. This block triggers a $0$ on the output wire exactly when the input is $w_k^l$. }
    \label{fig: checker}
  \end{subfigure}
  \hfill
  \begin{subfigure}{.5\textwidth}
    % \documentclass[convert = false, tikz]{standalone}

% \begin{document}
\begin{tikzpicture}[scale=0.25,every node/.style={scale=0.7}]
\fill[gray!17]  (14,10) rectangle (16.5,-10.5);

\draw[line width=0.05](6.0,-3.9) rectangle (4.4,-5.5);
\draw[line width=0.05](3.2,-3.8) rectangle (1.6,-5.4);
\draw[line width=0.05](13.6,-3.9) rectangle (12.0,-5.5);
\draw[line width=0.1,double distance=1pt,thick,{latex[scale=2.0]}-] (12.8,5.9)--(12.8,4.7)--(0.6,4.7)--(0.6,-6.7) --(18.3,-6.7);
\draw[line width=0.05](2.4,-3.9) --(2.4,-2.7);
\draw[line width=0.05](5.2,-3.9) --(5.2,-2.7);
\draw[line width=0.05](12.8,-3.9) --(12.8,-2.7);
\draw[line width=0.1,double distance=1pt,thick,-{latex[scale=2.0]}](2.4,-6.7) --(2.4,-5.5);
\draw[line width=0.1,double distance=1pt,thick,-{latex[scale=2.0]}](5.2,-6.7) --(5.2,-5.5);
\draw[line width=0.1,double distance=1pt,thick,-{latex[scale=2.0]}](12.8,-6.7) --(12.8,-5.5);
\draw[line width=0.05](15.6,-2.3) rectangle (14.8,-3.1);
\draw[line width=0.05,->](15.2,-3.1) --(15.2,-6.6);
%\draw[line width=0.05](5.4,-2.5) rectangle (5.0,-2.9);
\draw[line width=0.05](2.4,-2.7) --% (5.0,-2.7);
% \draw[line width=0.05](5.4,-2.7) --
% (12.6,-2.7);
% \draw[line width=0.05,->](13.0,-2.7 )
(14.8,-2.7);
% \draw[line width=0.05](13.0,-2.5) rectangle (12.6,-2.9);
\draw[line width=0.05](3.2,0.9) rectangle (1.6,-0.7);
\draw[line width=0.05](6.0,0.9) rectangle (4.4,-0.7);
\draw[line width=0.05](13.6,0.9) rectangle (12.0,-0.7);
\draw[line width=0.05,->](15.6,2.1) --(18.7,2.1);
\draw[line width=0.05](2.4,0.9) --(2.4,2.1);
\draw[line width=0.05](5.2,0.9) --(5.2,2.1);
\draw[line width=0.05](12.8,0.9) --(12.8,2.1);
\draw[line width=0.05](15.6,2.5) rectangle (14.8,1.7);
\draw[line width=0.05,->](16.0,1.3) --(16.0,-6.6);
% \draw[line width=0.05](5.4,2.3) rectangle (5.0,1.9);
\draw[line width=0.05](2.4,2.1)%  --(5.0,2.1);
% \draw[line width=0.05](5.4,2.1)
% --(12.6,2.1);
% \draw[line width=0.05,->](13.0,2.1)
--(14.8,2.1);
% \draw[line width=0.05](13.0,2.3) rectangle (12.6,1.9);
\draw[line width=0.1,double distance=1pt,thick](18.8,-8.7) --(18.8,9.5);
\draw[line width=0.05,->](15.6,-2.7) --(18.7,-2.7);
\draw[line width=0.05](15.2,1.7) --(15.2,1.3) -- (16.0,1.3);
\draw[line width=0.1,double distance=1pt,thick,-{latex[scale=2.0]}](0.6,-1.9) --(12.8,-1.9)--(12.8,-0.7);
\draw[line width=0.1,double distance=1pt,thick,-{latex[scale=2.0]}](2.4,-1.9) --(2.4,-0.7);
\draw[line width=0.1,double distance=1pt,thick,-{latex[scale=2.0]}](5.2,-1.9) --(5.2,-0.7);
\draw[line width=0.05](2.4,7.5) --(2.4,8.7);
\draw[line width=0.05](2.4,8.7) % --(5.0,8.7);
% \draw[line width=0.05](5.4,8.7)
% --(12.6,8.7);
% \draw[line width=0.05,->](13.0,8.7)
--(14.8,8.7);
\draw[line width=0.05](3.2,7.5) rectangle (1.6,5.9);
\draw[line width=0.05](6.0,7.5) rectangle (4.4,5.9);
\draw[line width=0.05](5.2,7.5) --(5.2,8.7);
\draw[line width=0.1,double distance=1pt,thick,-{latex[scale=2.0]}](2.4,4.7) --(2.4,5.9);
\draw[line width=0.1,double distance=1pt,thick,-{latex[scale=2.0]}](5.2,4.7) --(5.2,5.9);
% \draw[line width=0.05](5.4,8.9) rectangle (5.0,8.5);
\draw[line width=0.05](13.6,7.5) rectangle (12.0,5.9);
\draw[line width=0.05](12.8,7.5) --(12.8,8.7);
\draw[line width=0.05](15.6,9.1) rectangle (14.8,8.3);
\draw[line width=0.05,->](17.0,7.9) --(17.,-6.6);
% \draw[line width=0.05](13.0,8.9) rectangle (12.6,8.5);
\draw[line width=0.05](15.2,8.3) --(15.2,7.9) -- (17.0,7.9);

\draw[line width=0.05,->](15.6,8.7) --(18.7,8.7);
\draw[line width=0.1,double distance=1pt,thick,-{latex[scale=2.0]}](18.8,9.5) --(20.6,9.5);
\draw[line width=0.05](15.6,-8.3) rectangle (14.8,-9.1);
\draw[line width=0.05,->](15.2,-8.3) --(15.2,-6.8);
\draw[line width=0.05,->](15.6,-8.7) --(18.7,-8.7);
%%%
\node[scale=2] at (-.7, -4.7){$\rho_1$};
\node at (2.4,-4.7){$B_1^1$};
\node at (5.2,-4.7){$B_2^1$};
\node at (12.8,-4.7){$B_{K_1}^1$};
\node[scale=2] at (8.5,-4.7){$\cdots$};
% \node[inner sep=1pt,draw=black,circle,fill=white] at (5.2,-2.7){$\wedge$};
% \node[inner sep=1pt,draw=black,circle,fill=white] at (12.8,-2.7){$\wedge$};
\node[scale=2] at (-.7, 0.2){$\rho_2$};
\node at (2.4,0.1){$B_1^2$};
\node at (5.2,0.1){$B_2^2$};
\node at (12.8,0.1){$B_{K_2}^2$};
\node[scale=2] at (8.5,0.1){$\cdots$};
% \node[inner sep=1pt,draw=black,circle,fill=white] at (5.2,2.1){$\wedge$};
% \node[inner sep=1pt,draw=black,circle,fill=white] at (12.8,2.1){$\wedge$};
\node[scale=2] at (-.7, 6.7){$\rho_L$};
\node at (2.4,6.7){$B_1^L$};
\node at (5.2,6.7){$B_2^L$};
% \node[inner sep=1pt,draw=black,circle,fill=white] at (5.2,8.7){$\wedge$};
\node at (12.8,6.7){$B_{K_L}^L$};
\node[scale=2] at (8.5,6.7){$\cdots$};
% \node[inner sep=1pt,draw=black,circle,fill=white] at (12.8,8.7){$\wedge$};

\node[scale=1.5] at (8.5,3.5){$\vdots$};

% \node at (5.2,3.3){$\vdots$};
% \node at (12.8,3.3){$\vdots$};
\node at (15.2,-9.5){$1$};
\node at (15.2,-1.9){$2$};
\node at (15.2,2.9){$3$};
\node at (15.2,9.5){$L$};

\draw[line width=0.1,double distance=1pt,thick] (19.3,-6.7) arc (0:180:0.5);
\draw[line width=0.1,double distance=1pt,thick,>-] (21.5,-6.7) -- (19.3,-6.7);

\node[style={scale=1.5}] at (15.25,10.6) {STATE};
\end{tikzpicture}
% \end{document}
    \caption{Construction of a macro-cell. Single line represent wires transporting one bit and double lines represent multi-bit wires (representing a state).} \label{fig: compute}
  \end{subfigure}
  \caption{Recognizing one neighborhood (left), and wiring these \emph{neighborhood matchers} into a \emph{macro-cell} which computes the local function of $F$ (right).}
\end{figure}
Given $l$ and $k$, the logic gate diagram of Figure~\ref{fig: checker}, the \emph{Neighborhood Matcher}, called $B_k^l$, outputs $0$ if and only if in the input in the wires is exactly $w_k^l$ or the output cable was already in state $0$. The $i$-th wire joins the $i$-th letter in $w$ with either the $\exists$ gate on the left if the $i$-th letter of $w_k^l$ is $1$ or the $\forall$ gate on the right if the $i$-th letter of $w_k^l$ is $0$. Gate $\exists$ triggers a $0$ on wire $x$ if at least on of its incoming wire is $0$, while gate $\forall$ triggers a $0$ on wire $y$ if all incoming wires are $0$. Note that both behaviors are compatible with the freezing conditions since the set of wires in state $0$ can only grow during evolution. The gate $\alpha$ at the top triggers a $0$ on the output wire if wire $y$ is in state $0$ and wire $x$ is in state $1$ (see Figure~\ref{fig: checker}). It is also a freezing gate, meaning that once it has triggered a $0$ it will never change its state again, even if the wire $x$ turns to state $0$. Moreover this gate also turns into ``trigger'' state as soon as the output wire is $0$.

\myparagraph{Local Function Computation}
Now we can compute the local function of $F$ through a \emph{macro-cell} $\mathbb{C}_F$, receiving the states $x = (x_n)_{n \in N}$ of the neighborhood as input, and yielding the next state $f(x)$ as output.
For this we will divide the space into rows $\rho_l$ for $l \in Q_u$, and some number of columns. Intuitively, the role of row $\rho_l$ is to maintain the information ``the current state of the macro-cell is less than $l$''.
For a given $l \in Q_u$, $\rho_l$ contains all block $B_k^l$ for ${k\in\{1,\ldots,K_l\}}$. The inputs are distributed to each block, and the outputs of all blocks in $\rho_l$ are connected together by a broadcast wire. Thus, the final output in $\rho_l$ is $0$ as soon as one block $K_k^l$ triggers, \textit{i.e.} as soon as $f(x) \leq l$, see Figure~\ref{fig: compute}. Notice that once a neighborhood matcher $B^l_k$ in row $l$ has output $0$, the output of the macro-cell it belongs to must be less than $l$ for ever: indeed, at the time when the $B^l_k$ was triggered to output $0$ the output value of the Macro-Cell must be less than $l$ by definition of $B^l_k$, after that time the output is always less than $l$ thanks to the freezing condition on the CA being simulated. Concatenating rows in the right order, we obtain as output of the gate the correct state codification $f(x)$ for any state of the neighborhood $x$ received as input.

\myparagraph{Information exchanging}
Given these basic blocks, one needs to embed one \emph{macro-cell} per simulated cell on the simulator CA, and wire the inputs and outputs of neighboring macro-cells, as in Figure~\ref{fig: big block}. % For each of our concrete universal FCA, we will show how to realize, on the micro-scale, the logic gates necessary for the neighborhood detection and the macro-cell, and the wires and crossing.
The wiring between \emph{macro-cells} depends on the neighborhood of the simulated CA. In order to clarify the presentation we will always assume that the simulated CA has a von Neumann neighborhood which is enough to achieve universality thanks to the following lemma.

\begin{lemma}
  \label{lem:reductionvonneumann}
  For any dimension $d$ and any ${F\in\FCA^d}$ there is ${G\in\FCA^d}$ with von Neumann neighborhood that simulates $F$.
\end{lemma}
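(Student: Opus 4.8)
The plan is to show that any freezing CA $F$ with an arbitrary finite neighborhood $N$ can be simulated by a freezing CA $G$ whose neighborhood is $\VN_d$, by the standard device of composing $F$ with itself (or rather, unfolding the dependency on a larger neighborhood through several $\VN_d$-steps) while carefully keeping intermediate data in a way that respects a freezing order. The subtlety compared with the classical CA setting is exactly that: the usual trick for reducing neighborhood size is to iterate the local rule, but a naive iteration stores the configuration in $F$'s state set, and writing down a neighborhood value and later overwriting it violates freezing. So the construction must only ever \emph{accumulate} information.

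First I would fix a radius $r$ with $N \subseteq \{-r,\dots,r\}^d$, and note $N \subseteq (\VN_d \cup -\VN_d)^{\circ m}$ for $m$ linear in $r$; equivalently, the value $F(c)_z$ depends only on the cells within $\VN_d$-distance $m$ of $z$. The state set of $G$ will be (a subset of) tuples recording, for a cell $z$ and a bounded number of relative positions $v$ in the ball of $\VN_d$-radius $m$, a partial view of $c_{z+v}$. Because $F$ is $\prec$-freezing, each such recorded coordinate is itself an element of $Q_F$ that can only decrease; so the natural product order on $G$'s states is a freezing order \emph{provided} we never re-initialise a coordinate. The mechanism is: at each macro-step, $G$ first propagates the current $F$-state of each cell outward along the $d$ axes (one hop per step, using only $\VN_d$) so that after $m$ steps every cell has gathered the $F$-states of its whole $\VN_d$-ball of radius $m$; then it applies $f$ to produce the new $F$-state; then it discards the gathered data \emph{only by overwriting it with the freshly gathered data of the next round}, which is legitimate only if that operation is monotone. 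To make it monotone, I would have $G$ not literally forget, but instead let the ``gathering registers'' also decrease monotonically: one sets up the update so that between two applications of $f$ the auxiliary registers evolve by a freezing rule (they only ever move down in $\prec$), and each application of $f$ is scheduled on a separate sub-layer, using a bounded counter component that itself counts down (hence is freezing). This is exactly the ``unary representation'' idea used elsewhere in the paper: encode the whole compound state of $G$ in unary so that any decrease in $\prec$ corresponds to flipping some bits from $1$ to $0$.

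Concretely, the steps in order are: (1) choose $m$ so that $F^1$ on neighborhood $N$ factors through $m$ steps of $\VN_d$-locality; (2) define $G$'s state set as the unary encoding of $Q_F \times (\text{bounded synchronisation counter}) \times (\text{registers for the } O(m^d) \text{ relative } F\text{-states needed})$; (3) define $G$'s local rule so that during the first $m$ sub-steps it copies neighbours' $F$-registers into its own registers (a monotone ``fill in the blanks'' operation: a register goes from the top symbol $\top$ down to an actual value, never back up), during sub-step $m{+}1$ it computes $f$ of the assembled neighbourhood and writes it into the $F$-slot (again monotone by freezing of $F$), and meanwhile the counter decrements; (4) once the counter wraps, reset is simulated by moving to a ``fresh'' coordinate block — but since we only have finitely many states we instead argue the counter has period bounded by the depth of $F$'s freezing order times $m$, because $F$ itself changes only boundedly often, so only boundedly many macro-steps are nontrivial and after the last one everything is constant; hence a finite freezing $G$ suffices; (5) check injectivity of the coding map $\phi$ (trivial: the $F$-slot alone recovers $c_z$) and the simulation identity $\overline\phi(F(c)) = G^{m+1}(\overline\phi(c))$.

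The main obstacle is step (3)–(4): reconciling ``repeatedly gather neighbour data'' with ``states only decrease''. The honest resolution uses that $F$ is not merely a CA but a \emph{freezing} one: a cell changes its $F$-value at most $k_F$ times, so there are at most $k_F{+}1$ genuinely distinct ``phases'' of the whole orbit locally, and we can afford $k_F{+}1$ disjoint register blocks (plus a freezing pointer selecting the active one), making $G$'s global state monotone for the lexicographic/product order on these finitely many blocks. I expect that writing this bookkeeping cleanly — in particular verifying that the pointer advance and the register fills are simultaneously monotone and correctly timed, and that $G$ is genuinely $\VN_d$-local throughout — is where the real work lies; the rest (injectivity, the block/context data $B$, $C$, $T=m{+}1$) is routine. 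One can also simplify by allowing $G$ a larger slowdown and a trivial context $C=\{\vec0\}$, since here no coherence-of-wires issue arises.
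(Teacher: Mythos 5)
Your overall strategy is the paper's: pipeline the $Q_F$-states of the cells of $N$ toward the centre through one-hop von Neumann copies held in auxiliary $Q_F$-valued registers, apply $f$ to the assembled data, and use the freezing of $F$ to argue the registers are monotone. But the crucial mechanism is fumbled, and the detour you take to fix it does not work. The paper needs no synchronisation counter, no $\top$ placeholders, and no duplicated register banks: $G$'s state set is just $|N|\cdot m+1$ copies of $Q$, and \emph{every} component updates by a conditional copy (``adopt the candidate value only if it is $\prec$-smaller, otherwise keep''). This one device does everything simultaneously: it makes $G$ freezing for the product order on components; it makes re-gathering automatically monotone, because the value re-copied into a register is the \emph{current} state of the same simulated cell, which by freezing of $F$ can only have decreased since the last copy (so no re-initialisation is ever needed); and it makes the pipeline self-synchronising on valid encodings, since a change of a simulated cell propagates along its path at exactly one hop per step, giving $\overline\phi(F(c))=G^{m+1}(\overline\phi(c))$ exactly, with no phase information stored anywhere.

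The genuine gap is your step (3)--(4). A per-cell counter that ticks at every (sub-)step cannot be freezing and yet must run forever: in a freezing CA a cell may make its (at most $k_F$) changes at arbitrarily late times (think of a $0$ creeping along a long segment of $1$s), so macro-step boundaries occur every $T$ steps for all time, not only during the boundedly many ``active'' phases. Your fallback argument --- ``only boundedly many macro-steps are nontrivial, hence a finite down-counter suffices'' --- conflates ``boundedly many changes per cell'' with ``bounded time horizon'': the exact identity $\overline\phi(F(c))=G^{T}(\overline\phi(c))$, iterated, must hold across arbitrarily long idle stretches, during which your counter must either cycle (not freezing) or get stuck (after which your scheduled gather/compute phases can no longer fire when a late change finally arrives with the correct $T$-step latency). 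The moment you make the updates event-driven to stay freezing --- which you must --- the $k_F{+}1$ register banks and the pointer become redundant, and the remaining issue, exact timing without a clock, is precisely what the conditional-copy rule solves for free. Finally, your closing remark that one could take a trivial context $C=\{\vec0\}$ is doubtful: the registers of $\overline\phi(c)$ at cell $z$ must hold states of \emph{other} cells of $c$ for the first macro-step to be correct, so $\phi$ has a nontrivial context (essentially the shape of $N$); with a context-free encoding the registers could depend on $c_z$ only and would have to be re-blanked at the end of every cycle, which is exactly what freezing forbids. Nothing in the lemma requires a trivial context, and the paper explicitly keeps one.
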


The von Neumann wiring between \emph{macro-cells} in dimension $2$ is shown on Figure~\ref{fig: big block}. It is straightforward to generalize it to any dimension. Technically, thanks to Lemma~\ref{lem:reductionvonneumann}, all the encoding map $\phi$ we use later have a von Neumann neighborhood context ($C$ in Definition~\ref{def:contextsensitivesimu}).

\begin{figure}[h]
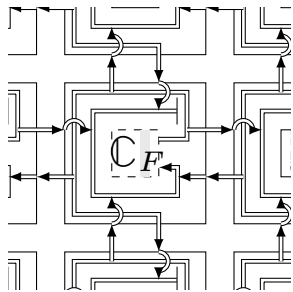

\centering
\include{./img/macrocell2}
\caption{The dashed block in the middle is a macro cell, as in figure \ref{fig: compute}.
The fat arrows exchange the state of each macro-cell with its neighbors. 
}
\label{fig: big block}
\end{figure}

\myparagraph{Context-sensitive encoding}
Given a configuration $c$ of the simulated CA, the encoding is defined as follows. All wires of the construction are in a coherent state (same state along the wire). Each macro-cell holds a state of the configuration $c$ represented in unary by the rows $\rho_l$ described above. Wires incoming from neighboring macro-cells hold the information about neighboring states (hence the context-free encoding) which is transmitted to each block $B_k^l$. Inside each of this blocks the inputs arrive at gates ``$\exists$" and ``$\forall$" and these gates have eventually triggered a $0$ on wires $x$ and $y$. However, gate ``$\alpha$" has not yet triggered to preserve the property that the main wire of row $\rho_l$ is coherent and represents the information on the \emph{current} state of the macro-cell. Starting from that well encoded configuration, a simulation cycle begins by the possible triggering of ``$\alpha$'' gates. After some time a well encoded configuration is reached again because changes coming from triggerings of $\alpha$ gates are broadcasted on each row, and, in each block $B_k^l$, the content up to the $\alpha$ gate is determined by the inputs.

We can now state three variants of the construction which differ essentially in the way crossing of information is implemented.

\begin{theorem}
  \label{thm:2change-sim} $\exists U \in \FCA_{\VN_2}$ with $5$ states which is $2$-change and $\FCA^2$-universal.
\end{theorem}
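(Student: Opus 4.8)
The plan is to run the abstract macro-cell construction of Section~\ref{sec:abstract-sim} on top of a concrete five-state substrate, so the work splits into three. First, fix a five-state freezing CA $U$ with von Neumann neighborhood and identify the signal primitives it supports. Second, realise inside $U$ all the circuit elements the abstract scheme needs: one-shot ``fuse'' wires, turns, broadcast junctions (which double as $\exists$ gates), $\forall$ gates, the latching $\alpha$ gate, delays obtained by zigzags, and---the only genuinely new ingredient---an asynchronous non-interfering wire crossing. Third, given an arbitrary $F\in\FCA^2$, put it in von Neumann unary form, lay out the associated macro-cell circuit in $U$, and read off a coding $\phi\colon Q_F^{\VN_2}\to Q_U^B$ whose induced $\bar\phi$ is injective and satisfies $\bar\phi\circ F=U^T\circ\bar\phi$ for a block size $B$ and slowdown $T$ depending on $F$.

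For the substrate, the states of $U$ are an inert background, a wire state at rest, and a few ``fired'' states. A wire is a tree of rest-cells in which a fired cell contaminates its rest neighbours, so a $0$ appearing anywhere is broadcast to the whole tree; this realises fuse-wires, turns, T-junctions (hence OR), and, after padding paths with zigzags, arbitrary delays. The $\forall$ gate and the latching $\alpha$ gate are obtained through rest/fired behaviours distinguished by which fired state they emit, so that a cell reacts only to the incoming signals it is meant to; freezing is automatic because the set of fired cells only grows. The crux, and the reason five states and two changes are exactly what is needed here, is the crossing gadget: a bounded sub-configuration of $U$ with four ports $A,A',B,B'$ behaving as two logically independent fuses---a contamination entering $A$ is eventually passed to $A'$ but never reaches $B$ or $B'$, and symmetrically---correctly for every relative order of arrival of the two signals. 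The mechanism is that the cell(s) at the geometric crossing are allowed to change twice: the first change records that one wire's signal has passed, emitting a fired state that only that wire's cells react to, and the second change records the passage of the other wire's signal via the second fired state, so that the two broadcasts stay separated even though the crossing cell is read by all of its von Neumann neighbours. Verifying this gadget under every timing, and checking that the global rule of $U$ never lets any cell change more than twice and is $\prec$-freezing for a suitable order $\prec$, is where most of the care goes.

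To assemble the simulation, an arbitrary $F\in\FCA^2$ is first turned into an isomorphic CA with states in unary representation and then, via Lemma~\ref{lem:reductionvonneumann} (composing simulations), into some $G\in\FCA_{\VN_2}$ in unary form. For this $G$ the macro-cell $\mathbb{C}_G$ of Figure~\ref{fig: compute}---neighborhood matchers $B_k^l$ wired into rows $\rho_l$, together with the inter-macro-cell state exchange of Figure~\ref{fig: big block}---is a fixed finite freezing circuit; embedding it in $U$ (routing all wire crossings with the gadget above and equalising wire lengths with zigzags) fixes the block $B$ and a slowdown $T=T(G)$ bounded by the circuit's depth plus the wire lengths. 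The coding $\phi$ maps a neighbourhood pattern in $Q_G^{\VN_2}$ to the corresponding well-encoded block content (each $\rho_l$ in unary, all wires coherent, the $\exists$ and $\forall$ gates already settled, the $\alpha$ gates not yet triggered), exactly as in the ``Context-sensitive encoding'' paragraph; injectivity of $\bar\phi$ is immediate since the $\rho_l$ rows read off the simulated state, and $\bar\phi\circ G=U^T\circ\bar\phi$ follows from the local-function-computation argument there. The encoding is genuinely context-sensitive with context $C=\VN_2$ (rather than $\{\vec0\}$) precisely because the wires linking neighbouring macro-cells must be coherent, so a block's content depends on those of its neighbours. Composing the three simulations and noting $U\in\FCA_{\VN_2}$, $U$ is $2$-change and $\FCA^2$-universal.

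The main obstacle is the crossing gadget. Every other primitive is monotone and hence routine under freezing, but a non-interfering crossing sits right at the edge of what the freezing condition permits---the companion impossibility result shows it cannot be done with a single change---so the gadget must be designed so that the two transitions available to each of its cells are exactly enough to keep the two one-shot signals apart, under every possible timing, using only five states and without any diagonal neighbour. The remaining work is bookkeeping: arranging delays so that $\alpha$-gates fire only within a simulation cycle and the encoding returns to a well-encoded configuration after $T$ steps, and checking that $T$ can be chosen uniformly over all configurations.
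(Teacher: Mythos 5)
Your proposal follows essentially the same route as the paper's proof: a direct $5$-state von Neumann implementation of the abstract fuse-wire/macro-cell scheme, in which the crossing is resolved by letting the crossing cell change twice through intermediate states recording which direction has already fired, and the whole simulation of an arbitrary $F\in\FCA^2$ goes through Lemma~\ref{lem:reductionvonneumann} and the context-sensitive encoding with context $\VN_2$. The only difference is one of detail, not of approach: the paper writes out the explicit transition table and the reduction from the naive eleven states to five (gates re-encoded by a background cell distinguished by its south neighbour), which your sketch asserts but does not exhibit.
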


\begin{theorem}
  \label{thm:3d-sim} $\exists U \in \FCA_{\VN_3}$ with $2$ states which is $1$-change and $\FCA^3$-universal.
\end{theorem}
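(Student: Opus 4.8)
\noindent The plan is to realise the abstract construction of Section~\ref{sec:abstract-sim} with the smallest possible alphabet, using the third dimension to obtain wire crossings for free. By Lemma~\ref{lem:reductionvonneumann} it suffices to simulate an arbitrary $F\in\FCA^3$ with von Neumann neighbourhood, so the goal becomes: design a $2$-state freezing CA $U$ with neighbourhood $\VN_3$ that simulates $F$ with a von Neumann context in the sense of Definition~\ref{def:contextsensitivesimu}. I would take the two states of $U$ to be $\{0,1\}$ with $0\prec 1$, empty space being the (stable) state $0$ and wire material being $1$, a $0$ travelling along a line of $1$'s modelling the burning of a fuse. Since $0\prec 1$ is the only freezing order on two states, any such $U$ is automatically $1$-change and there is nothing to check on that point; freezing itself is immediate once the local rule only ever outputs $0$ or keeps the current value. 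What the abstract scheme leaves to be supplied is (i) a concrete $2$-state implementation of the fuse-wires and of the gates $\exists$, $\forall$, $\alpha$ of Figures~\ref{fig: checker}--\ref{fig: compute}, together with the zig-zag delays that keep the wiring synchronous, and (ii) an asynchronous way to cross two wires without interference.

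Working in dimension $3$ makes (ii) cost nothing. I would lay out the whole circuit $\mathbb C_F$ of a macro-cell, together with the inter-macro-cell wiring of Figure~\ref{fig: big block}, inside a thin slab $\Z^2\times\{0,1,2\}$; whenever the planar routing of two wires would force them to touch, one of them detours by one cell in the $e_3$ direction (a ``via'' up, one segment one layer higher, a ``via'' down), so it runs strictly above the other with no pair of cells of the two wires ever adjacent. Crossing then reduces to ordinary wiring in $\Z^3$ and the abstract construction of Section~\ref{sec:UFCA} applies verbatim; this is the spatial analogue of the way the extra state change of Theorem~\ref{thm:2change-sim} buys crossings in the plane.

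The remaining work is to write down the rule of $U$. Since there is no spare state to record what a cell is used for, each role must be encoded geometrically: the fate of a $1$-cell is prescribed as a function of which of its $\VN_3$-neighbours carry wire material and which are empty, the few available directions being just enough. A straight segment along an axis burns as soon as a $0$ reaches it from along that axis and ignores its transverse (empty) neighbours; a turn, a $T$-junction (used both for the broadcast wire of a row $\rho_l$ and for fan-out, realised as a binary tree), a zig-zag delay and a via are further fixed local patterns; $\exists$ is the cell that burns as soon as one incoming arm is $0$, $\forall$ the cell that burns once all of its geometrically fixed incoming arms are $0$, and $\alpha$ a latched cell that burns when its $y$-arm is $0$ while its $x$-arm is still $1$ (and also if its output arm is already $0$). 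I would then verify that the finitely many patterns attached to distinct roles never prescribe conflicting outputs, so the rule is well defined; that an empty $0$-cell is genuinely inert and never forces a neighbour to burn; and that, started from a well-encoded configuration (all wires coherent, macro-cells holding unary states, $\alpha$ gates not yet fired), the orbit exhibits only intended patterns, so that after a fixed number $T$ of steps a well-encoded configuration coding $F(c)$ is reached. Injectivity of $\bar\phi$ is then clear, since the state of a macro-cell can be read back in unary from the rows $\rho_l$.

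The step I expect to be the main obstacle is not conceptual but the combinatorial bookkeeping just described, and in particular the control of the transient between two well-encoded configurations so that it never produces a pattern triggering an unintended burn. The delicate spot is the neighbourhood matchers: a mistimed firing of an $\alpha$ gate would destroy the coherence of a row $\rho_l$ and corrupt the encoded state. As is standard for circuit encodings in cellular automata~\cite{banks,OllingerUniv}, I would handle this by inserting enough zig-zag delay on the arms entering each block $B_k^l$ so that $x$ and $y$ stabilise before $\alpha$ can act, and by normalising all wire lengths, between macro-cells and inside $\mathbb C_F$, to a single slowdown $T$. These verifications are routine; the content of the statement lies entirely in the observation that in dimension $3$ asynchronous wire crossing is available for free, which makes the abstract construction go through with just the two states $0\prec 1$.
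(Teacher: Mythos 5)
Your overall architecture is the same as the paper's: realise the abstract construction of Section~\ref{sec:abstract-sim} with planar circuitry, use the third dimension only for asynchronous crossings (your ``via'' is exactly the crossing of Figure~\ref{fig: 3d}), and note that with two states the 1-change property is automatic. But the decisive ingredient of the theorem is missing: an actual two-state local rule for which all the gadgets (fuse wires, turns, fan-out junctions, zig-zag delays, and above all the $\exists$, $\forall$ and $\alpha$ gates) coexist consistently. You defer exactly this point (``I would then verify that the finitely many patterns attached to distinct roles never prescribe conflicting outputs''), and that verification is not routine; with two states it is where the whole difficulty lies, because a burnt wire cell and an empty background cell are the \emph{same} state $0$, so a $1$-cell can only be told its role through the exact $0/1$ pattern of its six neighbours, and natural assignments do conflict. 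Concretely: an $\alpha$-gate cell whose $\exists$-side arm has burnt while its $\forall$-side arm is still $1$ must \emph{not} fire, whereas a fan-out ($T$-junction) cell presenting the very same local pattern (one burnt arm, two live arms, remaining neighbours $0$) must propagate the signal; resolving such clashes needs extra conventions (orientation restrictions, permanent-$1$ marker stubs that must themselves be shown never to burn), none of which you supply, and the ``latch'' behaviour of $\alpha$ (never fire once the $\exists$ side fired first) only works if the firing condition is an exact-pattern, non-monotone condition — a point your sketch leaves implicit.

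The paper closes precisely this gap differently: it does not design a bespoke rule at all, but exhibits one concrete rule — a $1$ becomes $0$ iff it has \emph{exactly two} $0$-neighbours, $0$s never change — and appeals to \cite{GolesMMO17}, where the 2D restriction of this rule is shown to implement all the required synchronous logic gates; the 3D construction then only adds the wire transitions in the third direction and the crossing of Figure~\ref{fig: 3d}. So your proposal is a plausible plan that matches the paper's strategy at the structural level, but as a proof it stops short of the theorem's actual content: either produce and check a concrete two-state rule together with its gadgets, or invoke an existing one as the paper does.
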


\begin{theorem}
  \label{thm:Moore-sim} $\exists U \in \FCA_{\MN_2}$ with $4$ states which is $1$-change and $\FCA^2$-universal.
\end{theorem}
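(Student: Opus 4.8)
The plan is to instantiate the abstract macro-cell scheme of Section~\ref{sec:abstract-sim} with an explicit four-state, one-change cellular automaton on the Moore neighborhood. By Lemma~\ref{lem:reductionvonneumann} and transitivity of simulation it suffices to simulate every $F\in\FCA^2$ whose neighborhood is von Neumann, so the inter--macro-cell wiring is fixed once and for all to the pattern of Figure~\ref{fig: big block}. The four states will be: a \emph{background} state $\bot$ that never changes and occupies everything outside wires and gates; a \emph{rest} state $1$, a wire cell carrying the bit $1$; a \emph{fired} state $0$, a wire cell carrying the bit $0$; and a single \emph{marker} state $\ast$ used to label the few distinguished cells of the crossing, $\forall$, and $\alpha$ gadgets. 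The local rule will only ever rewrite $1\mapsto 0$ or $\ast\mapsto 0$; hence every cell changes at most once, so $U$ is $1$-change, and $U$ is $\prec$-freezing for the partial order with chain $0\prec\ast\prec 1$ and $\bot$ isolated.

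The core of the proof is to realize, as constant-size gadgets in this automaton, all the circuit primitives required by the abstract construction: fuse wires (a $1$ becomes $0$ as soon as an adjacent wire cell is $0$, so finite wires are trees that broadcast a $0$), wire forks and merges — which already implement the $\exists$ gates of Figure~\ref{fig: checker} for free, an OR of fuses being nothing but a merge — adjustable delays (zigzagging wires, as in classical CA circuit encodings), the $\forall$ gates, the freezing $\alpha$ gates, and the one genuinely new ingredient, \emph{wire crossings}. The observation making all this possible with four states is that the Moore neighborhood consists of nine cells, hence $4^{9}$ distinguishable local configurations: a single local rule has ample room to recognize, purely from the geometric arrangement of wire cells around a given cell, which primitive that cell implements, the $\ast$ markers being reserved to disambiguate the positions inside a gadget where several strands run close together. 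Wiring the neighborhood matchers $B_k^l$ into rows $\rho_l$ exactly as in Figure~\ref{fig: compute} then yields a macro-cell $\mathbb{C}_F$ that computes the unary representation of the local rule of $F$.

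The main obstacle is the crossing gadget under the one-change restriction: once a cell has fired it is frozen forever, so a carelessly drawn crossing would leak the horizontal $0$ into the vertical wire and corrupt the simulation. The idea is to exploit the diagonal adjacencies of $\MN_2$ that von Neumann lacks: near a crossing point the two wires are laid out so that they never have Moore-adjacent cells except at a constant number of $\ast$-marked \emph{bridge} cells, through which each wire is routed by a short detour over the other, and the local rule is designed so that a bridge cell propagates only along its own strand and ignores the transverse one. One must check that the gadget behaves correctly whatever the order in which, and whatever the relative delay with which, the two signals reach it — this is precisely the ``crossing information asynchronously'' phenomenon highlighted at the start of Section~\ref{sec:UFCA}. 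It is exactly this gadget that cannot be built with four states and one change on the von Neumann neighborhood, which is why the von Neumann statement (Theorem~\ref{thm:2change-sim}) must either add a state or allow a second change; the diagonal neighbors of the Moore neighborhood are what let us stay at four states here. Getting this gadget right, and checking that all gadgets compose into Figures~\ref{fig: compute} and~\ref{fig: big block} without any cell being asked to change twice, is where essentially all of the work lies.

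Finally one verifies the simulation in the sense of Definition~\ref{def:contextsensitivesimu}. The coding map $\phi$ sends a state of $F$ together with its von Neumann context (the remark following Lemma~\ref{lem:reductionvonneumann}) to the block of a $U$-configuration in which the macro-cell holds the unary code of that state in its rows $\rho_l$, the wires between macro-cells hold the unary codes of the neighboring states, every wire is in a coherent state, and no $\alpha$ gate has fired yet. The induced $\bar\phi$ is injective because a block determines both its state and its context, and the equation $\bar\phi(F(c))=U^{T}(\bar\phi(c))$ holds for a slowdown $T=T_F$ bounding one macro-step — all $\alpha$ gates fire, the resulting $0$s propagate through every matcher, down the broadcast wire of each row, and across the inter--macro-cell wires, until a well-encoded configuration is reached again — and $T_F$ is finite because every gadget has constant size and the local rule of $F$ is finite. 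Since this construction is uniform in $F$, the automaton $U$ is $\FCA^2$-universal; it has four states, Moore neighborhood, and is $1$-change and freezing by the first paragraph.
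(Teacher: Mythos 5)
Your route is the paper's: reduce to von Neumann targets via Lemma~\ref{lem:reductionvonneumann}, instantiate the abstract macro-cell construction of Section~\ref{sec:abstract-sim}, realize crossings through the diagonal adjacencies of the Moore neighborhood so that each strand listens only to its own diagonal neighbours and no cell ever changes twice, and use the room afforded by the nine-cell neighbourhood to push the state count down to four. The paper's implementation differs only in detail: it reuses the five-state rule of Theorem~\ref{thm:2change-sim} with its crossing transitions removed, adds explicit Moore rules realizing the crossing of Figure~\ref{fig: moore} with plain wire cells only (the two diagonal strands do touch orthogonally at the centre, and the added rules make such cells ignore their orthogonal neighbours), and reaches four states by deleting one of the two gate-marker states and encoding the type of a gate by the pattern of cells in the bottom row of its Moore neighbourhood, whereas you design a fresh rule with a single generic marker state and ``bridge'' cells. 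These are differences of packaging, not of substance.

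Two points, however, need repair. First, the crossing gadget --- the entire novelty of this theorem over Theorem~\ref{thm:2change-sim} --- is only asserted to exist: once a bridge cell has fired it is an ordinary fired cell, so a cell of the transverse strand adjacent to it must be able to tell from its radius-one view alone that this $0$ is not on its own strand, and this is precisely what must be drawn and checked (the paper does exhibit the gadget and its rules); also, your side remark that the obstruction on von Neumann is about having only four states is inaccurate --- Theorem~\ref{teo: nfu1D} excludes $1$-change von Neumann universality with any number of states. Second, and concretely wrong as stated: your encoding declares that in $\bar\phi(c)$ ``no $\alpha$ gate has fired yet''. Fired $\alpha$ gates are frozen, so as soon as any simulated cell changes, $U^{T}(\bar\phi(c))$ contains fired $\alpha$ gates forever after, and can never again equal an encoding in which none has fired; the exact equality $\bar\phi(F(c))=U^{T}(\bar\phi(c))$ required by Definition~\ref{def:contextsensitivesimu} therefore fails. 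The abstract construction avoids this by making the trigger status of $\alpha$ gates part of the well-encoded configuration: an $\alpha$ gate also switches to its triggered state as soon as its output (row) wire is $0$, so after each cycle the fired $\alpha$ gates are exactly those of the rows $\rho_l$ with current state at most $l$ --- a datum depending only on the current state of the macro-cell, hence producible by $\phi$. Your four-state scheme accommodates this fix (let the $\alpha$ marker fire also when its output neighbour has fired), but without it the simulation equation does not hold.
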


%-----------------------------------------------------------------------------------------------------%

\section{Obstacles to $\FCA$-Universality}
\label{sec:obstacles}
\myparagraph{The one-dimensional case}
\label{sec:1d}
Although one-dimensional freezing CA can be computationally universal \cite{GolOlThey15}, they cannot be $\FCA^1$-universal. This is a major difference with CA in general. The intuition behind this limitation is the following: in any given 1D freezing CA, there is a bound on the number of times a zone of consecutive cells can be crossed by a signal; and above this bound, the zone becomes a blocking word preventing any information flow between left and right halves around it.

\begin{theorem}[Dimension 1]\label{teo:1Dsim}
    There is no $F \in \FCA^1$ which is $\FCA_{\VN_1}$-universal, even with context-sensitive simulation.
\end{theorem}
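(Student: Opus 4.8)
The plan is to show that no single one-dimensional freezing CA $U$ can simulate the whole class $\FCA^1$, by exhibiting, for any candidate $U$ (with state set $Q$, neighborhood radius $r$, slowdown $T$, block size $b$ and context $C$), a target freezing CA $G$ that $U$ provably cannot simulate. The key quantity is the following: since $U$ is freezing, along any orbit each cell changes at most $k$ times, where $k$ is the depth of $U$'s freezing order. Hence, for any fixed length $\ell$, a block of $\ell$ consecutive cells in $U$ undergoes at most $k\ell$ changes in total over its entire (infinite) evolution. The guiding intuition stated in the text is that a "signal" crossing such a block must leave a trace, so after at most $k\ell$ crossings the block freezes into a fixed word, and that word then acts as a wall: the future evolution to its left is independent of the future evolution to its right. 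I would make "crossing leaves a trace" precise via a trace/de Bruijn-type argument: if a block of cells goes through the same sequence of states on two different global orbits that agree on the block but differ elsewhere, then by determinism and locality one can splice the left half of one configuration to the right half of the other and get a valid orbit; so the number of genuinely distinct "behaviors" visible on either side through a frozen block is bounded.

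Concretely, first I would fix the numerology: given $U$, let $n$ be large enough that a window of $n$ cells of $U$ encodes, via $\bar\phi^{-1}$, at least one but at most some bounded number $m$ of cells of $G$ (this uses that the block size $b$ and context $C$ of the simulation are fixed). Then I would design $G$ to be a freezing CA on a large enough alphabet whose dynamics require, across a single simulated cell, arbitrarily many independent left-to-right and right-to-left information transfers — for instance a "counter" or "token-passing" gadget where a cell must be able to relay, one after another, an unbounded number of distinct signals between its two neighbors while itself making only a bounded number of state changes (this is possible in $G$ because $G$'s alphabet can be as large as we like, so a single change of a cell in $G$ can carry a lot of information, whereas in $U$ a block of $n$ cells has a fixed total change budget $kn$). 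The point is a counting mismatch: simulating one step of $G$ through $\bar\phi$ forces $U$ to transport, across the $n$-cell image of a simulated cell, a new piece of information at each step, but $U$ can only do this a bounded number of times before that image region freezes.

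The heart of the argument, and the step I expect to be the main obstacle, is turning "the block freezes and becomes a wall" into a rigorous contradiction with the simulation identity $\bar\phi(G(c)) = U^T(\bar\phi(c))$. I would argue as follows: choose two configurations $c_1, c_2$ of $G$ that agree on a long central block $J$ but differ arbitrarily far to the left and right of it, chosen so that $G^t(c_1)$ and $G^t(c_2)$ eventually disagree inside $J$ (possible by the token-passing gadget, since information from both sides must pass through $J$). Their encodings $\bar\phi(c_1), \bar\phi(c_2)$ agree on the (large) central region $I = \bar\phi(J)$. Since $U$ is freezing, the subword of $U^t(\bar\phi(c_i))$ on a sub-block $I' \subseteq I$ of length $n$ freezes after finitely many steps — and, crucially, the total number of distinct frozen words achievable on $I'$ starting from encodings that agree on $I$ is bounded (by $|Q|^n$, but more usefully: once frozen, $I'$ decouples the two sides, by the splicing argument above applied to $U$). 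So for $t$ large, $U^t(\bar\phi(c_1))$ and $U^t(\bar\phi(c_2))$ must agree on $I'$ — because on each side the frozen wall $I'$ has already "used up" its change budget and the two orbits are determined, on that side, by data that was identical on $I$. Pulling back through the injective $\bar\phi$, $G^t(c_1)$ and $G^t(c_2)$ agree on the simulated cells imaged into $I'$, for all large $t$ — contradicting the choice of the gadget $G$. The delicate points to get right are: (i) that the wall argument needs $I'$ to be frozen as a whole, not just each cell individually, which requires choosing $n$ larger than the synchronization/relay length of the gadget and invoking the bounded total change count $kn$ to force global freezing after finitely many steps uniformly; (ii) handling the context $C$ of the simulation, so that disagreement of $c_1,c_2$ only far from $J$ indeed yields encodings agreeing on a block of $I$ large enough to contain a frozen wall; and (iii) making the gadget $G$ itself genuinely freezing while still forcing unboundedly many crossings — which is why $G$ must pay for each crossing with fresh alphabet symbols rather than with repeated state changes.
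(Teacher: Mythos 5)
Your overall strategy is the right one and matches the spirit of the paper's proof: bound the number of times a width-$r$ window of the simulator $U$ can change (finitely many, because $U$ is freezing), observe that once such a window is temporally constant it is a blocking word decoupling left from right, and then pick the simulated CA adversarially, after seeing $U$, so that the simulation would force too many changes of that window. However, there is a genuine gap in the gadget you propose: you require a freezing $G$ in which a single cell (or a fixed window) relays \emph{arbitrarily many} temporally separated signals between its neighbors while making only boundedly many state changes. No such $G$ exists, and your own wall argument shows why: $G$ is itself a 1D freezing CA, so any window of width at least the radius of $G$ changes at most (depth of the order)$\times$(width) times, after which it is frozen and blocks all further information flow across it. ``Paying with fresh alphabet symbols'' does not escape this --- a fresh symbol appearing at a cell \emph{is} a state change of that cell, and a finite alphabet allows only finitely many. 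So the property you want $G$ to have (``unboundedly many crossings through one simulated cell'') is precisely the property that no member of $\FCA^1$ has, and the contradiction step where $G^t(c_1)$ and $G^t(c_2)$ must keep disagreeing inside $J$ at arbitrarily late times cannot be realized.

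The fix, and what the paper actually does, is to note that you never need unboundedly many crossings: it suffices to exceed the finite budget of $U$. Concretely, let $r$ be the radius of $U$ and $M$ a bound on the number of times the word on $r$ consecutive cells of $U$ can change along an orbit; then take $F$ with states $\{0,\dots,M+1\}$ and rule $F(c)_i=\min(c_i,c_{i+1})$, and the single configuration ${}^\infty(M+1)\cdot M^{k_1}(M-1)^{k_2}\cdots 1^{k_M}0^\infty$ with long plateaus. A blocking-word argument of exactly your kind shows that each of the $M+1$ leftward-moving fronts must force at least one change of the image window $[0,r-1]$, giving $M+1>M$ changes --- a contradiction. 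Each cell of this $F$ changes at most $M+1$ times, which is perfectly compatible with $F$ being freezing: the asymmetry you need is not ``bounded versus unbounded'' but ``budget of the fixed simulator versus number of states of the simulated CA, chosen after $U$''. A secondary caution: avoid deducing cell-level agreement of $G$-orbits from block-level agreement of their encodings, since $\overline\phi$ is injective globally but the local map $\phi$ need not be injective on blocks; counting changes of the simulator's window, as above, sidesteps that issue entirely.
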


\myparagraph{2D von Neumann 1-change FCA: information crossing}
\label{sec:2d_1change_VN}
We will show that there is no freezing universal FCA which is $1$-change and has the von Neumann neighborhood. This result is to be contrasted with the case of self-assembly tiling where an intrinsically universal system exists~\cite{DotyLPSSW12} (although with an unavoidably more technical definition of simulation). The intuition is that the propagation of state changes in such FCA produces $4$-connected paths that can not be crossed in the future of the evolution because only $1$ state change per cell is possible. As shown in the construction of Theorem~\ref{thm:2change-sim}, two changes per cell are enough to get rid of this limitation, even with the von Neumann neighborhood. 

We will show that no $1$-change von Neumann FCA can simulate the following 2-change FCA $(\mathbb{Z}^2,Q_F,\LN,f)$,
with ${Q_F=\{0,\leftarrow,\downarrow,\ldarrow\}}$ and where $f$ is defined by
\[\begin{matrix}
  f(0, \downarrow , 0) = \downarrow &&
  f(0,0,\leftarrow) = \leftarrow &&
  f(0,\ast,\ldarrow) = \leftarrow&&
  f(\downarrow,\ast,\leftarrow) = \ldarrow
\end{matrix}\]
and ${f(a,\ast,\ast) = a}$ else,
where $\ast$ stands for any state and the arguments of $f$ correspond to neighborhood LN in the following order: center, north, east.

\begin{theorem}
  \label{teo: nfu1D}
  There is no automaton in $\FCA_{\VN_2}$ which is 1-change and able to simulate $F$.
  Therefore there is no automaton in $\FCA_{\VN_2}$ which is 1-change and $\FCA_{\VN_2}$-universal.
\end{theorem}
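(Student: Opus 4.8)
\textbf{Plan of proof.}

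The plan is to argue by contradiction from the basic feature of $1$-change automata: the set of cells that have ever changed grows monotonically, and once a cell has changed it is frozen forever, so a large ``solid'' region of frozen cells becomes an impassable barrier — anything on the protected side of such a barrier evolves, after the freezing time, as a function of its own states and the (now constant) barrier values, independently of the other side. Suppose then that $G=(2,Q_G,\VN_2,g)\in\FCA_{\VN_2}$ is $1$-change and simulates $F$ with slowdown $T$, block-vector $b=(b_1,b_2)$ and context $C$ (note also that in $G$, whose neighbourhood is $\{(0,0),e_1,e_2\}$, information only flows in the $-e_1$ and $-e_2$ directions, which pins down the geometry of such barriers). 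I would first record the dynamical fact that makes $F$ special: a $\leftarrow$-signal travelling west crosses a \emph{width-$1$} vertical $\downarrow$-line (it turns the crossed cell into $\ldarrow$ and re-emits $\leftarrow$ to its west, while the $\downarrow$ continues downward), but it is \emph{stopped} by a vertical $\downarrow$-wall of width $\ge 2$: after the easternmost $\downarrow$ becomes $\ldarrow$, the next $\downarrow$ to its west satisfies no rule, so it stays $\downarrow$ and the signal dies. So $F$ sharply separates width-$1$ from width-$\ge 2$ vertical obstacles.

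Next I would set up the obstruction orbit together with a sibling orbit differing from it only far to the east. Take a long $\downarrow$-segment on column $0$, which spreads downwards, together with a $\leftarrow$-signal placed on row $0$ at a far-east column $M$; in $F$ the signal reaches column $0$ at time $\approx M$, crosses (cell $(0,0)$ becomes $\ldarrow$) and then turns $(-1,0),(-2,0),\dots$ into $\leftarrow$. The sibling orbit $c'$ is the same without the $\leftarrow$-signal, so that $F$ on $c$ and $c'$ disagree, after the crossing, at cell $(-1,0)$, which lies \emph{west} of column $0$, whereas $c$ and $c'$ differ initially only at $F$-cell $(M,0)$; hence $\bar\phi(c)$ and $\bar\phi(c')$ agree on all $G$-cells of first coordinate below roughly $b_1(M-1)$. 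Under the simulation, the blocks coding the cells of column $0$ that stay $\downarrow$ forever (all of column $0$ above row $0$, and — crucially — a long portion below it) become genuinely frozen: their $F$-contexts are eventually constant, and for a $1$-change CA an eventually-constant context forces the corresponding block to be eventually constant. These blocks form a tall frozen vertical wall $\mathcal W$ in $G$'s lattice, identical in both orbits and frozen well before the right-hand difference can reach its column; the difference must therefore detour around $\mathcal W$ to reach the block of $(-1,0)$ within $T\cdot\Theta(M)$ steps, and choosing the $\downarrow$-segment long enough relative to $T$, $b$ and the radius of $\VN_2$ should make this detour too long.

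The main obstacle is the unavoidable \emph{hole} in $\mathcal W$: since cell $(0,0)$ itself changes at the crossing (its second change, to $\ldarrow$), the block coding $(0,0)$ — and, through its context, the block coding $(0,-1)$ — is \emph{not} frozen, so $\mathcal W$ has a bounded gap exactly at the crossing row, which is the very row along which the right-hand difference wants to travel; a naive speed count then fails. Closing this gap is the heart of the argument, and I would attack it using precisely the width-$1$/width-$\ge 2$ dichotomy of $F$: the $G$-wall coding a single $\downarrow$-line is $b_1\ge 2$ cells wide, so it ``is'' a thick frozen wall in $G$, and the way $G$ is forced to route the simulated $\leftarrow$-signal through the bounded hole can be compared with a second orbit in which the simulated obstacle is a \emph{width-$2$} $\downarrow$-wall (where the $\leftarrow$-signal must \emph{die} after turning one cell into $\ldarrow$). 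The two $F$-orbits differ only in one extra $\downarrow$-column (plus the ensuing divergence), so their encodings differ only in two columns of blocks; tracking exactly which $G$-cells change near the crossing in each, and applying the frozen-barrier principle to the parts of the walls that genuinely freeze in time, should force $G$ either to wrongly let the signal past the width-$2$ wall or to wrongly block it at the width-$1$ line. If a single pair of orbits is not enough, one amplifies by stacking many $\leftarrow$-signals on consecutive rows, forcing the frozen wall (or the work that must be squeezed through its hole) to have size incompatible with the fixed slowdown $T$ and block-vector $b$.

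Finally, the second statement is immediate: since $\LN=\VN_2$, the automaton $F$ is itself in $\FCA_{\VN_2}$, so a $1$-change $\FCA_{\VN_2}$-universal automaton would in particular simulate $F$, contradicting the first statement.
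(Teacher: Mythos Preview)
Your intuition that a 1-change CA turns any connected set of already-changed cells into an impassable barrier is exactly right, and you correctly isolate the crux: the unavoidable hole in the wall at the crossing row. But your proposed device for closing that hole does not work. The width-$1$/width-$\ge 2$ dichotomy is a fact about $F$, not about $G$: under a context-sensitive encoding the two $F$-configurations (one $\downarrow$-column versus two adjacent ones) produce genuinely different $G$-blocks in the whole context-neighbourhood of the crossing, so there is no reason the simulator must behave the same way in the hole region. Nothing prevents $G$ from routing the simulated $\leftarrow$ through the hole when the encoded wall is width-$1$ and blocking it when the encoding is width-$2$; the encodings differ precisely where it matters. Your amplification by stacking many $\leftarrow$-signals does not help either: each crossing happens at its own row and creates its own bounded hole, so there is no accumulation of ``work through a single hole'' to contradict the fixed parameters $T,b$. (A side remark: you seem to take $\VN_2=\{\vec 0,e_1,e_2\}$ literally and conclude information in $G$ flows only southwest; the paper intends the full $5$-cell von Neumann neighbourhood, so this shortcut is not available.)

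What the paper does instead avoids the hole altogether by recasting the barrier argument as a \emph{crossing of changing paths}. The key lemma is: if $F^t(c)_z\neq c_z$ in a $1$-change von Neumann CA, then $z$ is joined by a $4$-connected path of cells that all changed to some cell that was already unstable in $c$. One then takes $c^n$ with a single $\downarrow$ at $(0,n)$ and a single $\leftarrow$ at $(n^2,0)$. By time $2nT$ the $\downarrow$ has produced in $G$ a changing path $P_n$ of length $\Omega(n)$ confined to a strip $|x|\le K$ around the vertical axis. Much later, around time $n^2T$, the $\leftarrow$ crosses and forces a change in the block west of the axis; the lemma yields a second changing path $P_n'$ of length only $o(n)$ joining a point with $x>K$ to a point with $x<-K$, both at height $O(1)$. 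The quadratic time-scale separation makes $P_n$ long and $P_n'$ short, so by planarity $P_n'$ must intersect $P_n$; the intersection cell has then changed twice, contradicting the $1$-change hypothesis. This is the missing idea: rather than trying to argue that the wall blocks information (and then fight the hole), one shows that any successful transmission across it \emph{is itself} a $4$-connected path of fresh changes that must meet the old one.
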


\myparagraph{Monotone FCA: synchronous vs. asynchronous information}
\label{sec:monotone}
% !TEX root = ./UFCA.tex
As for classical real function, we can consider the property of \emph{monotonicity} in CA:
given two configurations, one smaller that the other, their images by the CA compare in the same order.
We are particularly interested in the case where the order on configurations is given by the order on states of a freezing CA.
Several examples of such monotone FCAs were studied in literature. In particular, a simple model called \emph{bootstrap percolation} was proposed by Chalupa in  1979 \cite{Chalupa1979} to understand the properties in some magnetic materials. This model and several variants were studied from the point of view of percolation theory \cite{Gravner98,bollobas2015}, but also from the point of view of complexity of prediction \cite{Goles:2013:CBP:2535047.2535139}.

\begin{definition}
  A $\prec$-freezing CA $F$ of dimension $d$ with alphabet $Q$ is monotone if it satisfies:
    ${\forall c,c' \in Q^{\mathbb{Z}^d}: c \prec c' \Rightarrow F(c) \prec F(c')}$,
  where $\prec$ is naturally extended to configurations by
    ${c \prec c' \Leftrightarrow \forall z\in \mathbb{Z}^d: c_z \prec c'_z}$. 
\end{definition}

The intuitive limitation of monotone freezing CAs is that they must always produce a smaller state when two signals arrive simultaneously at some cell compared to when one of the two signals arrives before the other. We now exhibit a freezing non monotone CA $F$ that does precisely the opposite (non-simultaneous arrival produces a smaller state).  Next theorem shows that $F$ cannot be simulated by any freezing monotone CA.
$F$ is defined by $(\mathbb{Z}^2,\{0,s_1,s_2,w,\triangle\},LN,f)$ with $f$ given by:
\begin{multicols}{3}
\begin{itemize}
  \item $f\left(\LNpic{w}{s_i}{0}\right) = s_i$
  \item $f\left(\LNpic{w}{0}{s_i}\right) = s_i$
  \item $f\left(\LNpic{s_1}{s_2}{0}\right) = s_2$
  \item $f\left(\LNpic{s_1}{s_2}{w}\right) = s_2$
  \item $f\left(\LNpic{\triangle}{s_1}{s_1}\right) = s_1$
  \item $f\left(\LNpic{\triangle}{s_1}{w}\right) = s_2$.
\end{itemize}
\end{multicols}
and unspecified transitions let the state unchanged.
$F$ is $\prec$-freezing for the following order on states: $s_2\prec s_1\prec 0,\Delta,w$. 
Essentially $F$ is a FCA sending the signals $s_1$ and $s_2$ towards south or west along the wires materialized by state $w$. $s_2$ can also move on wires made of $s_1$. 
$\Delta$ plays the role of a non-monotone local gate: when two signals arrive simultaneously, a $s_1$-signal is sent to the south, but when only one signal arrives, a $s_2$-signal is sent to the south. $F$ cannot be simulated by any monotone FCA, hence no monotone FCA can be $\FCA$-universal.

\begin{theorem}
\label{teo: nfum}
  For any ${d\geq 1}$, there is no freezing monotone CA of dimension $d$ which is $\FCA_{\VN_d}$-universal.
\end{theorem}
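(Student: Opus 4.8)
The plan is to reduce the statement to a single explicit impossibility — that the CA $F$ displayed above cannot be simulated by any monotone freezing CA — and then to prove that impossibility by a ``freezing lift'' argument. The dimension $1$ case is already covered by Theorem~\ref{teo:1Dsim}. For $d\geq 2$, let $F'$ be the trivial $d$-dimensional extension of $F$ (acting as $F$ on the first two coordinates and as the identity on the others); it lies in $\FCA^d$ and still contains the non-monotone $\triangle$-gadget. By Lemma~\ref{lem:reductionvonneumann} applied to $F'$ there is $G\in\FCA_{\VN_d}$ simulating $F'$, and since simulation is transitive, any $\FCA_{\VN_d}$-universal CA would simulate $F'$. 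So it suffices to show that no monotone freezing CA simulates $F'$, which I will argue for $F$ itself (the reasoning is verbatim the same inside the first two coordinates).

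Assume then that $U$ is $\sqsubseteq_U$-freezing and $\sqsubseteq_U$-monotone and simulates $F$ via a coding $\bar\phi$ with slowdown $T$, block $B$ and context $C$. The tool is the following \emph{freezing lift}: if $c,c'$ are $F$-configurations coinciding outside a finite region $R$, and if for some $m$ the configuration $c$ coincides with $F^m(c')$ on the $C$-thickening $R'=\bigcup\{z+C:(z+C)\cap R\neq\emptyset\}$ of $R$, then $\bar\phi(c)\sqsubseteq_U\bar\phi(c')$. Indeed, on blocks $z$ with $(z+C)\cap R=\emptyset$ the two codings agree, while on the remaining blocks $z+C\subseteq R'$, so $\bar\phi(c)|_{\mathrm{block}\,z}=\bar\phi(F^m(c'))|_{\mathrm{block}\,z}=U^{mT}(\bar\phi(c'))|_{\mathrm{block}\,z}\sqsubseteq_U\bar\phi(c')|_{\mathrm{block}\,z}$ because $U$ is freezing. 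Applying $U$'s monotonicity to $\bar\phi(F^n(\cdot))=U^{nT}(\bar\phi(\cdot))$ then yields $\bar\phi(F^n(c))\sqsubseteq_U\bar\phi(F^n(c'))$ for all $n$.

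I would then feed two carefully timed pairs into this machine, both built from a single $\triangle$ at a cell $p_0$ together with a north wire and an east wire made of semi-infinite walls of $s_1$, placed far from $p_0$ (everything else being the absorbing state $0$). In the pair $(c,c')$ the north wall is identical in both, while the east wall front sits one cell closer to $p_0$ in $c$ than in $c'$; the wall positions are chosen so that in $c$ the east and north signals reach $p_0$ simultaneously (hence $p_0$ ends in state $s_1$) whereas in $c'$ the east signal arrives one step later (hence $p_0$ ends in state $s_2$). Since the $s_1$-wall advances uniformly under $F$, we have $c=F^{1}(c')$ on the whole east wire, so the freezing lift applies with $R$ a single far-away cell; and for large $n$, $F^n(c)$ and $F^n(c')$ differ only at $p_0$ (the value $s_1$ versus $s_2$ cannot spread, because the only cells depending on $p_0$ that are not on the wires carry the absorbing state $0$). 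Consequently, at every block $z$ whose context contains $p_0$, $\phi$ sends the window with $s_1$ at $p_0$ $\sqsubseteq_U$-below the window with $s_2$ at $p_0$. In the pair $(d,d')$ I perform the symmetric move: the east wall is kept fixed and the north wall is advanced by one cell in $d$, timed so that $d'$ has simultaneous arrival (output $s_1$) and $d$ has the north signal strictly first (output $s_2$); the same reasoning now gives the reverse inequality at exactly the same blocks and for exactly the same pair of windows (the surrounding window content near $p_0$ is, for large $n$, ``$s_1$ on the wires, $0$ elsewhere'' in all four runs). Since $\sqsubseteq_U$ is a partial order, $\phi$ takes the same value on these two windows at every block containing $p_0$, so $\bar\phi(F^n(c))=\bar\phi(F^n(c'))$ while $F^n(c)\neq F^n(c')$ — contradicting injectivity of $\bar\phi$.

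The part I expect to require the most care is the bookkeeping in the two timed constructions: one must pick the wall positions and arrival times so that (i) the unique cell where the two evolutions eventually disagree is $p_0$, (ii) the difference region $R$ together with its $C$-thickening $R'$ stays confined to a stretch of wire far from $p_0$, so that $c$ genuinely agrees with $F^m(c')$ on $R'$, and (iii) the $C$-windows around $p_0$ are literally the same configurations in the $(c,c')$ and $(d,d')$ runs, so that the two $\sqsubseteq_U$-inequalities concern the same pair of $\phi$-values. It is worth noting that the freezing lift uses only that $U$ is freezing, whereas monotonicity of $U$ is invoked exactly once — to transport the single initial inequality through $n$ macro-steps — which is precisely the ``synchronous versus asynchronous'' obstruction the gadget $\triangle$ was designed to exploit.
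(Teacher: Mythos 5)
Your proposal is correct, and it reaches the paper's conclusion by a noticeably different route in its second half. Like the paper, you use the same non-monotone gadget $F$, the same split of the conclusion ($d=1$ via Theorem~\ref{teo:1Dsim}, trivial embedding for higher $d$ -- note the detour through Lemma~\ref{lem:reductionvonneumann} and transitivity of simulation is not even needed, since $F$ and its trivial extension already lie in $\FCA_{\VN_d}$), and the same engine: a lemma converting ``locally looks like a later configuration'' into a $\prec$-inequality between encodings (your ``freezing lift'' is a finite-perturbation variant of the paper's Lemma~\ref{lem:localreach}), followed by a single application of monotonicity to push that inequality forward in time. Where you diverge is the endgame. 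The paper attaches a south output wire to the $\triangle$, works with the limit configurations $x^n,x^\infty,y^n,y^\infty$ and $F^\infty$, needs the stronger per-position ($E$-local) reachability lemma because $x^n$ and $x^\infty$ differ on an infinite half-line, and derives the contradiction by comparing the encodings of the two families over a half-plane. You instead keep the $s_1/s_2$ discrepancy confined to the single cell $p_0$ (no output wire, so the only cells reading $p_0$ are in the absorbing state $0$), run two symmetric one-cell-shift experiments whose final windows near $p_0$ coincide, obtain the two opposite inequalities between the same pair of $\phi$-images, and conclude by antisymmetry of the freezing order plus global injectivity of $\overline\phi$. This buys you a more elementary argument (finite differences only, no limits, no half-plane restriction, reachability needed only for a one-cell perturbation at time $1$), at the cost of the careful timing bookkeeping you flag yourself; the paper's formulation buys a reusable, more general local-reachability lemma. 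Your timing checks go through: the east/north fronts advance one cell per step, the thickened difference region stays on the far wire so $c$ agrees with $F(c')$ there, and the four runs share the same final window geometry around $p_0$, so the two inequalities really concern the same two values of $\phi$. (Both your argument and the paper's implicitly read $\prec$ and monotonicity non-strictly, i.e.\ as a partial order; that is the intended reading.)
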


\section{Perspectives}
Here are some questions or research directions that we think are worth being considered:
\begin{itemize}
\item do CA with a bounded number of change per cell also exhibit
  intrinsic universality? 
\item what about intrinsic universality for
  non-deterministic or asynchronous freezing CA as a generalization of
  aTAM models? 
\item what are the limit sets of $\FCA$? 
\item what can be said
  about $\FCA$ from an ergodic theory point of view (this includes
  questions from percolation theory)?
\end{itemize}

%-----------------------------------------------------------------------------------------------------%
\bibliographystyle{plainurl}
\bibliography{UFCA}

\newpage
\section{Appendices}

% \subsection{Proof of Theorem~\ref{teo:1Dsim}}
% \begin{proof}
% \end{proof}

\subsection{Proof of Theorem~\ref{teo: context sim}}

\begin{proof}
  Let $d\geq 1$ be any fixed dimension. By contradiction suppose that such an universal $F_u$ with alphabet $Q_u$ exists and consider for any ${n>0}$ the CA $F_n$ with states ${Q_n=\{-1,\ldots,-n\}}$ and von Neumann neighborhood with the following rule: a cell in state $q$ changes to state $r$ if ${r<q}$ and all its neighbors are in state $r$, otherwise it stays in state $q$. $F_n$ is $<$-freezing. By hypothesis $F_u$ must simulate each $F_n$ because they are all freezing CA by definition. For each $n$ let $B_n$ be the block size in the injection ${\phi_n: Q_n \rightarrow Q_u^{B_n}}$ given by simulation of $F_n$ by $F_u$ (it is a context free simulation so the context $C$ is a singleton). Since $Q_n$ is unbounded then $B_n$ is unbounded, so we can choose $n$ such that $B_n$ has at least one side which is at least two times the radius $r_u$ of the neighborhood of $F_u$. Without loss of generality we suppose that the left to right side of $B_n$ is long. Consider the configuration $x$ of $F_u$ made by a block $\phi_n(-1)$ at position $\vec{0}\in\Z^d$ surrounded by blocks $\phi_n(-2)$. Since $F_u$ on $x$ simulates $F_n$ on the configuration $x'$ made of a $-1$ surrounded by $-2$, the block at position $\vec{0}$ in $x$ must become $\phi_n(-2)$ after some time and in particular it must change: let $t_0$ be the first time such that ${F_u^{t_0}(x)}$ does not contain the block $\phi_n(-1)$ at position $\vec{0}$, and consider any position ${\vec{i}\in B_n}$ such that ${F_u^{t_0}(x)_{\vec i}\not=x_{\vec i}}$. Note that for any ${\vec k\not\in B_n}$ and any time $t$ we have ${F_u^{t}(x)_{\vec k}=x_{\vec k}}$ because cells in state $-2$ don't change during the evolution of $x'$ under $F_n$ so the corresponding blocks in the evolution of $x$ under $F_u$ don't change either: indeed, if such a block becomes different from $\phi_n(-2)$ at some time it will never become again $\phi_n(-2)$ (by the freezing condition on $F_u$ and by injectivity of $\phi_n$) thus contradicting the simulation of $F_n$ by $F_u$ through the coding $\phi_n$. Therefore it holds that ${F_u^{t_0-1}(x)=x}$ and necessarily ${t_0=1}$ so that ${F_u(x)_{\vec i}\not=x_{\vec i}}$. However, since $F_u(x)_{\vec i}$ depends only on the ${x_{\vec i+\vec z}}$ for ${\|\vec z\|_\infty\leq r_u}$ and since ${m_n\geq 2r_u}$, it is always possible to construct a pair of configurations $y$ of $F_u$ and $y'$ of $F_n$  satisfying the following conditions:
  \begin{enumerate}
  \item $y= \overline{\phi}(y')$ (\textit{i.e.} $y$ is a valid encoding of $y'$);
  \item $y'_{\vec 0}=-1$ and $F_u(x)_{\vec i}=F_u(y)_{\vec i}$;
  \item any position in $y'$ is in state $-1$ or $-2$ and has both state $-1$ and state $-2$ in its von Neumann neighborhood.
  \end{enumerate}
  Concretely, using symmetries we can suppose without loss of generality that $\vec i$ belong to the left part of the block it belongs to. Then one can choose ${y'_{\vec j}=x'_{\vec j}}$ for ${\vec j\in\{(-1,0),(-1,-1),(-1,1),(0,0),(0,-1),(0,1)\}}$ and complete it in a greedy way to satisfy condition $3$. Such a choice guaranties that ${F_u(x)_{\vec i}=F_u(y)_{\vec i}}$ because for any $\vec z$ with ${\|\vec z\|_\infty\leq r_u}$ we have ${x_{\vec i+\vec z} = y_{\vec i+\vec z}}$ (by the assumption that $\vec i$ belongs to the left part of its block and the fact the $B_n$ is long enough from left to right). $y'$ is a fixed point of $F_n$ (by condition 3) so $y$ must be a fixed point of $F_u$ (by the freezing condition on $F_u$ and the injectivity of $\phi_n$): this contradicts condition 2 which implies ${F_u(y)_{\vec i}\not=y_{\vec i}}$.
\end{proof}

\subsection{Proof of Lemma~\ref{lem:reductionvonneumann}}

\begin{proof}
  Consider a FCA ${F}$ with state set $Q$, freezing order $\prec$, neighborhood ${N=\{\vec n_1,\ldots,\vec n_k\}\subseteq\Z^d}$ and local transition map ${\delta : Q^N\rightarrow Q}$. For a suitable choice of $m$ and for each $i$ (${1\leq i\leq k}$) consider a von Neumann connected path ${P_i=(\vec p_{i,1},\ldots,\vec p_{i,m})}$ of length $m$ linking $\vec n_i$ to $\vec 0$: $\vec p_{i,1}=n_i$ and $\vec p_{i,m}=\vec 0$ and ${\vec\Delta_{i,j}=\vec p_{i,j}-\vec p_{i,j+1}\in\VN_d}$ for ${1\leq j <m}$. Now define a FCA $G$ with von Neumann neighborhood and state set made of ${mk+1}$ copies of $Q$, denoted by projections $\pi_0,\pi_{1,1}, \ldots, \pi_{k,m}$ from ${Q^{mk+1}\rightarrow Q}$, and with the following behavior at each step:
  \begin{itemize}
  \item ${\pi_0(G(c)_{\vec z}) = \delta\bigl(\pi_{1,m}(c_{\vec z}),\ldots,\pi_{k,m}(c_{\vec z})\bigr)}$ if ${\delta\bigl(\pi_{1,m}(c_{\vec z}),\ldots,\pi_{k,m}(c_{\vec z})\bigr)\prec \pi_0(c_{\vec z})}$ and $\pi_0(c_{\vec z})$ otherwise,
  \item ${\pi_{i,j+1}(G(c)_{\vec z}) = \pi_{i,j}(c_{\vec z+\vec \Delta_{i,j}})}$ if ${\pi_{i,j}(c_{\vec z+\vec\Delta_{i,j}})\prec \pi_{i,j+1}(c_{\vec z})}$ and ${\pi_{i,j+1}(c_{\vec z})}$ otherwise, for ${1\leq j<m}$ and ${1\leq i\leq k}$,
  \item ${\pi_{i,1}(G(c)_{\vec z})= \pi_0(c_{\vec z})}$ if ${\pi_0(c_{\vec z})\prec \pi_{i,1}(c_{\vec z})}$ and $\pi_{i,1}(c_{\vec z})$ otherwise, for ${1\leq i\leq k}$.
  \end{itemize}
  Intuitively, $G$ realizes in parallel the propagation of neighboring $Q$-states along paths of the form ${z+P_i}$ from any cell ${\vec z}$ and the application of the local transition $\delta$ in each cell using the $Q$-components corresponding to the end of each propagation path $P_i$.  Let's show that $G$ is a freezing CA that simulates $F$. First, it is clear that $G$ is freezing because in any transition, any component of the state can only decrease according to $\prec$. Now consider the encoding map ${\phi: Q^{\Z^d}\rightarrow \left(Q^{km+1}\right)^{\Z^d}}$ defined by:
  \begin{itemize}
  \item ${\pi_0(\phi(c)_{\vec z}) = \pi_{i,1}(\phi(c)_{\vec z}) = c_{\vec z}}$ for ${1\leq i\leq k}$,
  \item ${\pi_{i,j+1}(\phi(c)_{\vec z}) = \pi_{i,j}(\phi(c)_{\vec z+\vec \Delta_{i,j}})}$ for ${1\leq j<m}$ and ${1\leq i\leq k}$.
  \end{itemize}
  A configuration $\phi(c)$ is such that the $Q$-value is constant along $P_i$ paths so only the $\pi_0$ components can change when applying $G$ and we necessarily have ${\pi_0(G(\phi(c))_{\vec z}) = F(c)_{\vec z}}$ for ${1\leq i\leq k}$, because $F$ is freezing for the order $\prec$ and ${\pi_{i,m}(\phi(c)_{\vec z}) = c_{\vec z+ \vec n_i}}$ by the second item above and the definition of paths $P_i$. Then, in ${G(\phi(c))}$, only the $\pi_{i,1}$ components can change and it holds that 
\[\pi_{i,1}(G^2(\phi(c))_{\vec z}) = \pi_{0}(G(\phi(c))_{\vec z}) = F(c)_{\vec z}\]
for ${1\leq i\leq k}$. Similarly it is straightforward to check that after ${m+1}$ steps the two item of the definition of $\phi$ are again verified and we have: ${G^{m+1}(\phi(c)) = \phi(F(c))}$. This shows that $G$ simulates $F$ and the lemma follows.

\end{proof}

\subsection{Proof sketch of Theorem~\ref{thm:2change-sim}}
%\subsection{A 2d, 2-change, Von Neumann Neighborhood Intrinsically Universal FCA}
\label{sec:2d_2change_VN}
\newcommand{\rsquare}{{\color[RGB]{200,0,0} \blacksquare}}
\definecolor{reddish}{RGB}{200,0,0}
\newcommand\sgatewire[1]{
  \begin{tikzpicture}[baseline=-0.65ex]\draw (0,0) node[draw=black] {\tiny #1};
    \end{tikzpicture}
  }
\newcommand\sgatefire[1]{
  \begin{tikzpicture}[baseline=-0.65ex]
    \draw (0,0) node[fill=reddish] {\tiny #1};
  \end{tikzpicture}}

% states
\newcommand\sbg{\square}
\newcommand\swire{\blacksquare}
\newcommand\sfire{\rsquare}
\newcommand\shori{\sgatewire{$\leftrightarrow$}}
\newcommand\svert{\sgatewire{$\updownarrow$}}
\newcommand\salphawire{\sgatewire{$\alpha$}}
\newcommand\salphafire{\sgatefire{$\alpha$}}
\newcommand\sexistswire{\sgatewire{$\exists$}}
\newcommand\sexistsfire{\sgatefire{$\exists$}}
\newcommand\sforallwire{\sgatewire{$\forall$}}
\newcommand\sforallfire{\sgatefire{$\forall$}}

We can make a direct implementation of the abstract construction as a universal FCA is $U=\{\Z^2,\{\sbg,\swire,\sfire,\shori,\svert, \salphawire,\salphafire,\sexistswire,\sexistsfire,\sforallwire,\sforallfire
\},VN,f_u\}$. This is a 2-change FCA with freezing order: \[\sbg,\swire,\salphawire,\sexistswire,\sforallwire\geq \shori,\svert\geq \sfire, \salphafire,\sexistsfire,\sforallfire.\]
It implements all elements of Section~\ref{sec:abstract-sim} and the states have the following meaning:
\begin{itemize}
\item $\sbg$ is the quiescent background,
\item $\swire$ is a wire waiting for a signal,
\item $\sfire$ is a signal,
\item $\shori$ (resp. $\svert$) an intermediate states to manage a crossing when a first signal already passed horizontally (resp. vertically),
\item $\salphawire$ (resp. $\sexistswire$ and $\sforallwire$) is the $\alpha$ (resp. $\exists$ and $\forall$) gate waiting the conditions to trigger,
\item $\salphafire$ (resp. $\sexistsfire$ and $\sforallfire$) is the $\alpha$ (resp. $\exists$ and $\forall$) gate once it has triggered.
\end{itemize}

All wires described by the abstract construction are made by drawing
trees of degree at most $3$ of VN-connected cells in state $\swire$.
The case of $\swire$ with $4$ neighbors in state $\swire$ is reserved
to manage crossings. Also gates $\exists$ and $\forall$ have unbounded
fan-in in the abstract construction. Here we simulate unbounded fan-in
by fan-in $2$ (which is possible because the semantics of these gates
is associative). More precisely, all gates
($\salphawire$,$\sexistswire$ and $\sforallwire$) receive there first
(resp. second) input from their left (resp. right) neighbor and send
their output to the top.

The local rule is given by the following set of transitions, any cell which is in a local context not appearing in this list stays unchanged:
% C,N,E,S,W,result
\newcommand\transi[6]{
  \begin{matrix}
    &#2\\
    #5&#1&#3\\
    &#4
  \end{matrix}\mapsto #6
}
\begin{description}
\item[Normal wires:] if ${\sbg\in\{n,e,s,w\}}$
  \begin{itemize}
  \item $\transi{\swire}{n}{e}{s}{w}{\sfire}$ if
    ${\sfire\in\{n,e,s,w\}}$
  \item $\transi{\swire}{n}{e}{s}{w}{\sfire}$ if
    ${\shori\in\{e,w\}}$ or ${\svert\in\{n,s\}}$
  \end{itemize}
\item[Crossings:] if ${\{n,e,s,w\}\subseteq\{\swire,\sfire\}}$
  \begin{itemize}
  \item $\transi{\swire}{n}{e}{s}{w}{\shori}$ if ${\sfire\in\{e,w\}}$
    and ${\sfire\not\in\{n,s\}}$
  \item $\transi{\swire}{n}{e}{s}{w}{\svert}$ if ${\sfire\not\in\{e,w\}}$ and ${\sfire\in\{n,s\}}$
  \item $\transi{\swire}{n}{e}{s}{w}{\sfire}$ if ${\sfire\in\{e,w\}}$ and ${\sfire\in\{n,s\}}$
  \item $\transi{\shori}{n}{e}{s}{w}{\sfire}$ if ${\sfire\in\{n,s\}}$
  \item $\transi{\svert}{n}{e}{s}{w}{\sfire}$ if ${\sfire\in\{e,w\}}$
  \end{itemize}
\item[Gates triggering:] if ${\{e,w\}\subseteq\{\swire,\sfire\}}$
  \begin{itemize}
  \item $\transi{\salphawire}{n}{\swire}{s}{\sfire}{\salphafire}$,
  \item $\transi{\sexistswire}{n}{e}{s}{w}{\sexistsfire}$ if ${\sfire\in\{e,w\}}$,
  \item $\transi{\sforallwire}{n}{\sfire}{s}{\sfire}{\sforallfire}$
  \end{itemize}
\item[Gates output:]\ \\
  \begin{itemize}
  \item $\transi{\swire}{n}{e}{s}{w}{\sfire}$ if ${s\in\{\salphafire,\sexistsfire,\sforallfire\}}$
  \end{itemize}
\end{description}

It appears that the south state in the gate triggering transitions above is not used. Moreover no transition involves the background state $\sbg$ and the behavior of $\svert$ is similar to that of gate output transitions. This allows us to reduce the state set to ${\{\sbg,\swire,\sfire,\shori,\svert\}}$ and to code all triggered gates by $\svert$ and (untriggered) gates $\salphawire$ (resp. $\sexistswire$ and $\sforallwire$) by a $\sbg$ state having at south a $\sbg$ (resp. $\shori$ and $\svert$). More precisely, we keep all transitions for normal wires and crossings and add the following ones which replace the gates triggering:
if ${\{n,e,w\}\subseteq\{\swire,\sfire\}}$
\begin{itemize}
\item $\transi{\sbg}{n}{\swire}{\sbg}{\sfire}{\svert}$,
\item $\transi{\sbg}{n}{e}{\shori}{w}{\svert}$ if ${\sfire\in\{e,w\}}$,
\item $\transi{\sbg}{n}{\sfire}{\svert}{\sfire}{\svert}$.
\end{itemize}

The gates output transitions are already realized by the behavior of
$\svert$ on normal wires. Finally, it is important to note that the crossing
transition that transforms a $\svert$ into $\sfire$ will not interfere
here because it applies only when all states surrounding $\svert$ belongs to ${\{\swire,\sfire\}}$ and in the $3$ transitions above to simulate gates, the $\svert$ state generated will have a state among ${\{\sbg,\shori,\svert\}}$ as south neighbor. We conclude that $5$ states are enough to achieve universality with von Neumann neighborhood in 2D.

\subsection{Proof sketch of Theorem~\ref{thm:3d-sim}}
%\subsection{A 3d, Von Neumann neighborhood, 1-change, intrinsically universal FCA}
\label{sec:3d}
The abstract construction works exactly the same way in 3D, the only difference being that there are more neighbors in the 3D version of von Neumann neighborhood. Moreover, the 2D CA constructed in Section~\ref{sec:2d_2change_VN} can be used almost as is to obtain a 3D FCA-universal example. Indeed, all the logic circuitry of macro cell can be done in a planar way and the third dimension matters only in the wiring between 3D macro-cells.

\begin{figure}[H]
    \centering
    % \documentclass[convert = false, tikz]{standalone}
% \usepackage{tikz}
% \usepackage{tikz-3dplot}

% \newcommand{\cubo}[3]{
% 	\draw[red,fill=yellow] (#1,#2,#3) -- ++(-\cubex,0,0) -- ++(0,-\cubey,0) -- ++(\cubex,0,0) -- cycle;
% 	\draw[red,fill=yellow] (#1,#2,#3) -- ++(0,0,-\cubez) -- ++(0,-\cubey,0) -- ++(0,0,\cubez) -- cycle;
% 	\draw[red,fill=yellow] (#1,#2,#3) -- ++(-\cubex,0,0) -- ++(0,0,-\cubez) -- ++(\cubex,0,0) -- cycle;
% }
% \begin{document}

\tdplotsetmaincoords{60}{60}
\begin{tikzpicture}[scale=0.3]
% \begin{tikzpicture}[tdplot_main_coords,scale=1.0]

\pgfmathsetmacro{\cubex}{1}
\pgfmathsetmacro{\cubey}{1}
\pgfmathsetmacro{\cubez}{1}
% \draw[red,fill=yellow] (\cubex,\cubey,\cubez) -- (\cubex+1,\cubey,\cubez) -- (\cubex+1,\cubey+1,\cubez) -- (\cubex,\cubey+1,\cubez) -- cycle;
% \draw[red,fill=yellow] (\cubex,\cubey,\cubez) -- (\cubex,\cubey,\cubez+1) -- (\cubex,\cubey+1,\cubez) -- (\cubex,\cubey,\cubez+1) -- cycle;
% \draw[red,fill=yellow] (\cubex,\cubey,\cubez) -- (\cubex+1,\cubey,\cubez) -- (\cubex,\cubey,\cubez+1) -- (\cubex+1,\cubey,\cubez) -- cycle;

\cubo{0}{ 0}{0}
\cubo{1}{ 0}{0}

% \cubo{4}{-9}{0}
% \cubo{4}{-8}{0}
% \cubo{4}{-7}{0}
% \cubo{4}{-6}{0}
% \cubo{4}{-5}{0}
% \cubo{4}{-4}{0}
\cubo{4}{-3}{0}
\cubo{4}{-2}{0}
\cubo{4}{-1}{0}
\cubo{4}{ 0}{0}
\cubo{4}{ 1}{0}
% \cubo{4}{ 2}{0}
% \cubo{4}{ 3}{0}

\cubo{2}{ 0}{0}
\cubo{2}{ 0}{1}
\cubo{2}{ 0}{2}
\cubo{3}{ 0}{2}
\cubo{4}{ 0}{2}
\cubo{5}{ 0}{2}

\cubo{6}{ 0}{0}
\cubo{6}{ 0}{1}
\cubo{6}{ 0}{2}

\cubo{7}{ 0}{0}
\cubo{8}{ 0}{0}

\end{tikzpicture}
% \end{document}
    \caption{Crossing signal in 3-D.}\label{fig: 3d}
\end{figure}
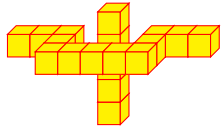%
 To do so, it is sufficient to add the 3-dimensional equivalent of the normal wires transition of Section~\ref{sec:2d_2change_VN}. 
Moreover, we can build a 3D FCA-universal CA which is only 1-change. This is possible by substituting all the crossings mechanics used in Section~\ref{sec:2d_2change_VN} for the crossing given in the Figure~\ref{fig: 3d}. Note that crossing transitions are the only place where the intermediate states $\shori $ and $\svert$ can disappear. Therefore, when removing those transitions, we get a 1-change FCA with freezing order: \[\sbg,\swire\geq \shori,\svert,\sfire.\]
At this point states $\shori$ and $\svert$ are totally unrelated to crossings and are just used to code the behavior in the gates triggering transitions and the propagation of a $\sfire$ at triggered gate outputs. However a cell in 3D has 6 von Neumann neighbors, therefore there is room to code all the different behaviors using less states. 

In fact 2 states are enough and FCA-universality is achieved by the 3D von Neumann FCA on ${\{0,1\}}$ given by:
\begin{itemize}
\item a $1$ surrounded by exactly two $0$s becomes $0$;
\item $0$s stay unchanged.
\end{itemize}
The 2D version of this FCA was studied in \cite{GolesMMO17} were it was shown that it can implement all necessary synchronous logical gates. By using such planar constructions in the 3D version and using the third dimension to implement asynchronous crossings as above, we can realize the abstract construction of Section~\ref{sec:abstract-sim}.

\subsection{Proof sketch of Theorem~\ref{thm:Moore-sim}}
%\subsection{2D, Moore Neighborhood, 1-change, intrinsically universal FCA}
\label{sec:2d_Moore}
% !TEX root = ./UFCA.tex

We build an intrinsically universal FCA $U_M$ with Moore Neighborhood.
\[U_M=\cadef(\mathbb{Z}^2,\{\sbg , \swire,\sfire,\svert,\shori\},\MN,f_M\})\,.\]

The local function $f_M$ is an extension of the CA defined in Section~\ref{sec:2d_2change_VN}, adding rules using the Moore neighborhood to build a crossing as follows:

\begin{multicols}{3}
\begin{itemize}
  \item $f_M\left(\MNpicf{}{$\rbsquare$     }{$\swire$    }{}{$\swire$    }{$\rbsquare$     }{$\swire$    }{}{}\right) = \swire$
  \item $f_M\left(\MNpicf{}{$\rbsquare$     }{$\sfire$}{}{$\swire$    }{$\rbsquare$     }{$\swire$    }{}{}\right) = \sfire$
  \item $f_M\left(\MNpicf{}{$\rbsquare$     }{$\swire$    }{}{$\swire$    }{$\rbsquare$     }{$\sfire$}{}{}\right) = \sfire$
  \item $f_M\left(\MNpicf{}{}{$\swire$    }{}{$\swire$    }{}{}{$\swire$}{}\right) = \swire$
  \item $f_M\left(\MNpicf{}{}{$\sfire$}{}{$\swire$    }{}{}{$\swire$}{}\right) = \sfire$
  \item $f_M\left(\MNpicf{}{}{$\swire$}{}{$\swire$    }{}{}{$\sfire$}{}\right) = \sfire$\,,
\end{itemize}
\end{multicols}

and the rotations and reflexions of these transitions, where $\rbsquare$ match both $\swire$ and $\sfire$ , and 

\begin{align*}
  f_M\left(\MNpic{$\alpha$}{$a$}{$\beta$}{$e$}{$c$}{$b$}{$\gamma$}{$d$}{$\delta$}\right) = f_u\left(\VNpic{$c$}{$a$}{$b$}{$d$}{$e$}\right) \ 
  \text{if \MNpic{$\alpha$}{$a$}{$\beta$}{$e$}{$c$}{$b$}{$\gamma$}{$d$}{$\delta$} is not in the previous cases}\,
\end{align*} 
where $f_u$ is the $5$-states CA defined in Section~\ref{sec:2d_2change_VN} without the crossings transition. It is therefore a 1-change CA with freezing order ${\sbg,\swire\geq\sfire,\shori,\svert}$.

With this local function the realization of crossings is given by Figure~\ref{fig: moore}.

\begin{figure}[H]
  \begin{subfigure}{.5\textwidth}
    \centering
    % \documentclass[convert = false, tikz]{standalone}
% \usepackage{tikz}

% \begin{document}
\begin{tikzpicture}[scale=0.3, every node/.style={scale=0.6}]

\fill[black] (2,4)  -- ++(1,0) -- ++(0,1) -- ++(-1,0) -- cycle;
\fill[black] (2,3)  -- ++(1,0) -- ++(0,1) -- ++(-1,0) -- cycle;
\fill[black] (3,2)  -- ++(1,0) -- ++(0,1) -- ++(-1,0) -- cycle;
\fill[black] (4,1)  -- ++(1,0) -- ++(0,1) -- ++(-1,0) -- cycle;
\fill[black] (5,0)  -- ++(1,0) -- ++(0,1) -- ++(-1,0) -- cycle;
\fill[black] (5,-1)  -- ++(1,0) -- ++(0,1) -- ++(-1,0) -- cycle;
\fill[black] (5,3)  -- ++(1,0) -- ++(0,1) -- ++(-1,0) -- cycle;
\fill[black] (4,2)  -- ++(1,0) -- ++(0,1) -- ++(-1,0) -- cycle;
\fill[black] (3,1)  -- ++(1,0) -- ++(0,1) -- ++(-1,0) -- cycle;
\fill[black] (2,0)  -- ++(1,0) -- ++(0,1) -- ++(-1,0) -- cycle;
\fill[black] (2,-1)  -- ++(1,0) -- ++(0,1) -- ++(-1,0) -- cycle;
\fill[black] (5,4)  -- ++(1,0) -- ++(0,1) -- ++(-1,0) -- cycle;
\draw  (1,5) rectangle (7,-1);
\draw [help lines, step=1cm] (1,-1) grid (7,5);
\node at (2.5,5.5) {In 1};
\node at (5.5,5.5) {In 2};
\node at (2.5,-2) {Out 1};
\node at (5.5,-2) {Out 2};
\end{tikzpicture}  
% \end{document}
  \end{subfigure}
  \hfill%
  \begin{subfigure}{.5\textwidth}
    \centering
    % \documentclass[convert = false, tikz]{standalone}
% \usepackage{tikz}

% \begin{document}
\begin{tikzpicture}[scale=0.3, every node/.style={scale=0.6}]

  \fill[reddish] (2,4)  -- ++(1,0) -- ++(0,1) -- ++(-1,0) -- cycle;
  \fill[reddish] (2,3)  -- ++(1,0) -- ++(0,1) -- ++(-1,0) -- cycle;
  \fill[reddish] (3,2)  -- ++(1,0) -- ++(0,1) -- ++(-1,0) -- cycle;
  \fill[reddish] (4,1)  -- ++(1,0) -- ++(0,1) -- ++(-1,0) -- cycle;
  \fill[reddish] (5,0)  -- ++(1,0) -- ++(0,1) -- ++(-1,0) -- cycle;
  \fill[reddish] (5,-1)  -- ++(1,0) -- ++(0,1) -- ++(-1,0) -- cycle;
  \fill[black] (5,3)  -- ++(1,0) -- ++(0,1) -- ++(-1,0) -- cycle;
  \fill[black] (4,2)  -- ++(1,0) -- ++(0,1) -- ++(-1,0) -- cycle;
  \fill[black] (3,1)  -- ++(1,0) -- ++(0,1) -- ++(-1,0) -- cycle;
  \fill[black] (2,0)  -- ++(1,0) -- ++(0,1) -- ++(-1,0) -- cycle;
  \fill[black] (2,-1)  -- ++(1,0) -- ++(0,1) -- ++(-1,0) -- cycle;
  \fill[black] (5,4)  -- ++(1,0) -- ++(0,1) -- ++(-1,0) -- cycle;
\draw  (1,5) rectangle (7,-1);
\draw [help lines, step=1cm] (1,-1) grid (7,5);
\node at (2.5,5.5) {In 1};
\node at (5.5,5.5) {In 2};
\node at (2.5,-2) {Out 1};
\node at (5.5,-2) {Out 2};
\end{tikzpicture}  
% \end{document}
  \end{subfigure}
  \caption{Crossing signal in Moore neighborhood. 
    If the cell in In 1 (2) is in state $\sfire$, then after five iterations the cell in Out 2 (1) change to state $\sfire$ (as seen on the right).}\label{fig: moore}
\end{figure}
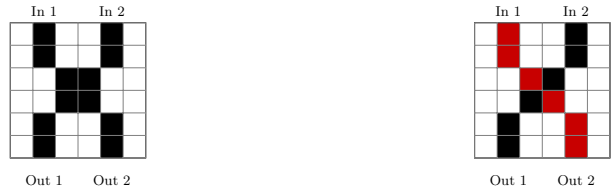%

As in previous constructions there is room for optimization of the number of states. For instance, one can remove the $\shori$ state used in $f_u$ to encode a type of gate. Instead we can use the space allowed by the Moore neighborhood to encode the gate by using different patterns of $\sbg$ and $\svert$ in the bottom row of the neighborhood. This give a FCA-universal example with only $4$ states.

A natural candidate with only 2 states in this setting is ``life without death''. It is quite possible that it is FCA-universal. However the circuitry built in~\cite{GriMoo96} to show the P-completeness of this CA cannot be used directly here, in particular it does not yield an implementation of asynchronous crossing. We leave the question of FCA-universality of ``life without death'' open.

%\TODO{}

%%% Local Variables:
%%% mode: latex
%%% TeX-master: "UFCA"
%%% End:

\subsection{Proof of Theorem~\ref{teo:1Dsim}}

\begin{proof}
  Suppose that $F_u$ is a freezing 1D CA with radius $r$ and alphabet $Q_u$ that can simulate any freezing 1D CA. There is a constant ${M\leq |Q_u|^r}$ such that the global state of a group of $r$ consecutive cells of $F_u$ cannot change more than $M$ times.
  Consider now the CA $F$ with states ${Q_M=\{0,\ldots,M+1\}}$ defined by: ${F(c)_i = \min(c_i,c_{i+1})}$ for any configuration $c$ and any ${i\in\Z}$. $F$ is a freezing CA for the natural order on integers so by hypothesis $F_u$ simulates $F$ using (context-sensitive) encoding map ${\overline\phi: Q_M^\Z\rightarrow Q_u^\Z}$ defined from a local map $\phi$ as in Definition~\ref{def:contextsensitivesimu}. If a configuration $c$ of $F$ is such that 
  \[c_i =
  \begin{cases}
    q &\text{ if } -k\leq i\leq k\\
    \geq q &\text{ if }i<k\\
    <q &\text{ if }i>k\\
  \end{cases}
\]
for $k$ larger than the radius of map $\phi$, then there is $t$ such that ${\overline\phi(c)_{[0,r-1]}\not=F_u^t(\overline\phi(c))_{[0,r-1]}}$: indeed if ${F_u^t(\overline\phi(c))_{[0,r-1]}}$ stays constant with $t$, then, considering the configuration $d$ such that ${c_i=d_i}$ for ${i\leq k}$ and ${d_i=q}$ for ${i>k}$, we also have that ${F_u^t(\overline\phi(d))_{[0,r-1]}}$ is constant (because ${F^t(d)_i}$ is constant for any ${i\geq -k}$ and $k$ is larger than the radius of $\phi$). Moreover ${\overline\phi(d)_i=\overline\phi(c)_i}$ for ${i<r}$, therefore we should have ${F_u^t(\overline\phi(c))_i=F_u^t(\overline\phi(d))_i}$ for all ${i<r}$ and all $t$ which would contradict the simulation of $F$ by $F_u$ since ${F^t(c)_i}$ and ${F^t(d)_i}$ differ for some $t$ and ${i<0}$.

Consider the following configuration of $F$: 
\[d = {}^\infty(M+1)\cdot M^{k_1}(M-1)^{k_2}3^{k_3}\cdots 1^{k_{M}}0^\infty\]
where the leftmost occurrence of state $M$ is at position ${i=k_0}$ for a choice of large enough ${k_0,\ldots,k_{M}}$. By the reasoning above, we must have that ${\bigl(F_u^t(\phi(d))_{[0,r-1]}\bigr)_t}$ must change ${M+1}$ times but this contradicts the definition of $M$.
\end{proof}

\subsection{Proof of Theorem~\ref{teo: nfu1D}}

Given a FCA $F$ with von Neumann neighborhood, we call \emph{changing path} from $z$ to $z'$ between configurations $c$ and ${F^t(c)}$ a path $z_1,\ldots,z_n$ such that:
\begin{itemize}
  \item $z_1=z$,
  \item $z_n=z'$,
  \item $z_{i+1} \in z_{i}+\VN_2,\ \forall i=1,\ldots,n $,
  \item $c_{z_i}\neq F^t(c)_{z_i},\ \forall i=1,\ldots,n$.
\end{itemize}
  
We also say that a position $z$ is \emph{stable} in a configuration $c$ if ${F(c)_z = c_z}$ and \emph{unstable} if it is not stable.
 
\begin{lemma}[Changing path lemma]
  \label{lem:changepath}
  Let $z$ be a position in some configuration $c$ and let ${t\geq 1}$ be such that ${F^t(c)_z\neq c_z}$, then there exists an unstable position $z'$ in $c$ and a changing path of length at most $t$ from $z$ to $z'$ between configurations $c$ and ${F^t(c)}$.
\end{lemma}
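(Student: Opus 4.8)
The plan is to prove this by strong induction on $t$, following the change at $z$ backwards through the orbit of $c$ until we hit a cell that is genuinely unstable in $c$ itself. The base case $t=1$ is immediate: if $F(c)_z\neq c_z$ then $z$ is by definition unstable in $c$, and the trivial one‑vertex path $(z)$ is a changing path from $z$ to $z'=z$ of length $1\le t$. For the inductive step we assume the statement for all smaller values of $t$, and we may again assume $z$ is \emph{stable} in $c$ (otherwise the path $(z)$ works as in the base case), so that $F(c)_z=c_z$ but $F^t(c)_z\neq c_z$.

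Let $t^\ast\in\{2,\dots,t\}$ be the \emph{first} time the cell at $z$ changes value, i.e.\ $F^{t^\ast}(c)_z\neq F^{t^\ast-1}(c)_z$ while $F^{s}(c)_z=c_z$ for every $s<t^\ast$ (note $t^\ast\ge 2$ because $z$ is stable in $c$). In $F^{t^\ast-1}(c)$ the cell $z$ still carries the value $c_z$ but is now unstable, whereas in $c$ it is stable with the same value; since the local rule reads only $z+\VN_2$, the restrictions $c|_{z+\VN_2}$ and $F^{t^\ast-1}(c)|_{z+\VN_2}$ must differ, and since they agree at $z$ they differ at some $y\in z+\VN_2$ with $y\neq z$. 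As $F^0(c)_y=c_y$ and $1\le t^\ast-1<t$, the induction hypothesis applied to the position $y$ and the time $t^\ast-1$ produces an unstable position $z'$ of $c$ and a changing path $y=z_1,\dots,z_m=z'$ of length $m\le t^\ast-1$ between $c$ and $F^{t^\ast-1}(c)$, i.e.\ with $c_{z_i}\neq F^{t^\ast-1}(c)_{z_i}$ for all $i$.

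It remains to promote this to a changing path between $c$ and $F^t(c)$ and attach $z$ at its front, and this is precisely where the freezing hypothesis enters. For each $z_i$ on the path, $c_{z_i}\neq F^{t^\ast-1}(c)_{z_i}$ forces $F^{t^\ast-1}(c)_{z_i}\prec c_{z_i}$ strictly, and since the rule is $\prec$‑freezing the value at $z_i$ can only descend further, so $F^{t}(c)_{z_i}\preceq F^{t^\ast-1}(c)_{z_i}\prec c_{z_i}$ and in particular $F^t(c)_{z_i}\neq c_{z_i}$. Hence $z_1,\dots,z_m$ is also a changing path between $c$ and $F^t(c)$. Because $y\in z+\VN_2$ and $F^t(c)_z\neq c_z$ by hypothesis, prepending $z$ yields a changing path $z,z_1,\dots,z_m$ from $z$ to the unstable position $z'$ between $c$ and $F^t(c)$, of length $m+1\le t^\ast\le t$, completing the induction. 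The same argument works verbatim in any dimension with $\VN_2$ replaced by $\VN_d$.

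The whole argument is short; the one genuinely load‑bearing step is the freezing inequality in the last paragraph, which is what converts "$z_i$ has already changed by the intermediate time $t^\ast-1$'' into "$z_i$ has changed by the target time $t$'', keeping the inductively obtained path valid — this is exactly the place where irreversibility of the dynamics is essential and where the statement would fail for a general (non‑freezing) CA. The auxiliary facts — that a cell turning from stable to unstable while keeping its value forces a neighbor in $z+\VN_2$ to have changed, and the length bookkeeping $m+1\le t^\ast\le t$ — are routine and require no delicate estimate.
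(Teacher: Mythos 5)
Your proof is correct and follows essentially the same route as the paper's: induct backwards from the first time $t^\ast$ at which $z$ changes, deduce that a neighbor must already differ at time $t^\ast-1$, apply the induction hypothesis to that neighbor, and prepend $z$. The only difference is that you spell out the freezing inequality needed to promote a changing path between $c$ and $F^{t^\ast-1}(c)$ into one between $c$ and $F^{t}(c)$, a step the paper leaves implicit, and your length bookkeeping matches the claimed bound.
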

\begin{proof}
  Consider the first time $t'$ such that ${F^{t'}(c)_z\neq c_z}$. If ${t'=1}$ we are done because in this case $z$ itself is unstable in $c$. Otherwise, $z$ is stable in $c$ and therefore one of its neighbors must have changed before time $t'$. Therefore we can apply inductively the lemma on this neighbor with a time ${t<t'\leq t}$ to get a changing path of length at most ${t-1}$ from an unstable position to this neighbor, which we complete into a changing path of length at most $t$ from an unstable position to $z$.
\end{proof}

\begin{proof}[Proof of Theorem~\ref{teo: nfu1D}]
Suppose by contradiction that such a 1-change FCA ${U=(\mathbb{Z}^2,Q,VN,f_s)}$ exists. Let ${\phi : Q_F^C\rightarrow Q^B}$ be the encoding map ensuring the simulation of $F$ by $U$ in $T$ steps:
 \[\overline{\phi}(F(c))=U^T(\overline{\phi}(c))\]
with ${\overline\phi(c)_{\vec z}}$ depending exactly on cells ${\left\lfloor\vec z/b\right\rfloor + C}$ of $c$, where ${b\in\Z^2}$ is the size-vector of $B$. The injectivity of $\overline\phi$, the simulation and Lemma~\ref{lem:changepath} ensure that there is a finite set ${E\subseteq\Z^2}$ (depending on $C$, $b$ and $T$) such that for any configuration ${c\in Q_F^{\Z^2}}$:
\begin{itemize}
\item if some position $z$ is unstable in $c$ then some position ${z'\in bz+E}$ is unstable in ${\overline\phi(c)}$;
\item if all positions in ${z+C}$ are stable in $c$ then all positions in ${bz+B}$ are stable in ${\overline\phi(c)}$.
\end{itemize}
Let us consider configuration ${c^{n}\in Q_F^{\Z^2}}$ for any ${n\geq 0}$ defined by:
\[c^{n}(z) =
\begin{cases}
  \downarrow&\text{ if }z=(0,n),\\
  \leftarrow&\text{ if }z=(n^2,0),\\
  0&\text{ else.}
\end{cases}
\]
By definition of $F$, for any $t$, ${F^t(c^{n})}$ contains exactly two unstable positions: ${(0,n-t)}$ and ${(n^2-t,0)}$ ($\downarrow$ propagates downward, $\leftarrow$ propagates to the left eventually crossing a $\downarrow$ and everything else stays unchanged). Using the observations above and Lemma~\ref{lem:changepath} we deduce that for any large enough $n$ there exist a changing path ${P_n=(z_1,\ldots,z_m)}$ of length ${\Omega(n)}$ from ${z_1\in b\cdot(0,n)+E}$ to ${z_n\in b\cdot(0,-n)+E}$ between configurations ${\overline\phi(c^{n})}$ and ${U^{2nT}(\overline{\phi}(c^{n}))}$. Moreover, choosing $n$ large enough, each position of the path $P_n$ is at distance at most $K$ from the vertical axis where $K$ is a constant given by the simulation that do not depend on $n$: indeed changes between ${c^{n}}$ and ${F^{2n}(c^{n})}$ occur on the vertical axis or at distance at least ${n^2-2n}$ from it. Then, for a suitable choice of a ${t_0\in o(n)}$ we have that:
\begin{itemize}
\item there is an unstable position ${z=(x,y)}$ in ${U^{(n^2-t_0)T}(\overline{\phi}(c^{n}))}$ with ${y>K}$ and at distance ${o(n)}$ from the center $(0,0)$, while all other unstable positions are at distance ${\Omega(n)}$ from the center;
\item there is a position ${z'=(x',y')}$ with ${y'<K}$ and at distance $o(n)$ from the center such that ${U^{(n^2+t_0)T}(\overline{\phi}(c^{n}))_{z'}\neq\overline{\phi}(c^{n})_{z'}}$.
\end{itemize}

\begin{figure}[H]
  \centering
  \begin{tikzpicture}[scale=.5]
    \draw[dotted] (-1,5) node[above] {$-K$} --(-1,-5);
    \draw[dotted] (1,5) node[above] {$K$} --(1,-5);
    \draw[dotted,->] (0,-5) -- (0,6);
    \draw[dotted,->] (-2,0) -- (3,0);
    \draw[very thick,draw=blue] plot [smooth, tension=1] coordinates {(0,5)  (-0.1,3)  (-0.5,0)  (0.1,-1) (0,-3) (0.2,-5)};
    \draw[very thick,draw=red] plot [smooth, tension=1] coordinates {(-1.5,0)  (-1,0.2)  (-0.5,-.2)  (1,.1) (1.5,-.3)};
    \draw (-0.1,3) node[left] {$P_n$};
    \draw (1.5,-.3) node[below right] {$|P'_n|\in o(n)$};
    \draw[<->] (-2,5)-- node[midway,left] {$\Omega(n)$} (-2,-5);
  \end{tikzpicture}
  \caption{Changing paths $P_n$ and $P'_n$ that must cross each other.}
  \label{fig:paths}
\end{figure}
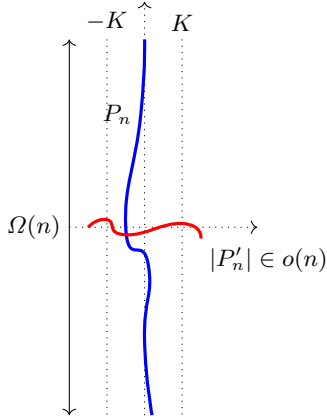

Using Lemma~\ref{lem:changepath} again, we deduce the existence of a changing path $P'_n$ of length ${o(n)}$ from $z$ to $z'$ between configurations ${U^{(n^2-t_0)T}(\overline{\phi}(c^{n}))}$ and ${U^{(n^2+t_0)T}(\overline{\phi}(c^{n}))}$. Given the respective length and endpoints of $P_n$ and $P'_n$ (see Figure~\ref{fig:paths}), they must necessarily cross each other: this is a contradiction because all positions of $P_n$ have already made a change under the action of $U$ after time ${2nT}$, so none of them can change later since $U$ is $1$-change.
\end{proof}

\subsection{Proof of Theorem~\ref{teo: nfum}}

The key to understand the limitations of monotone FCAs is to establish that some configurations is 'above' another one for the freezing order, and then use the fact that this relation is preserved under iterations. The next lemma gives a tool to obtain such relations in the context of a simulation between FCAs.

Given a CA $F$ and a finite set ${E\subseteq\Z^d}$, we say that a configuration $y$ is \emph{$E$-locally reachable} from a configuration $x$ if for any ${\vec{i}\in\Z^d}$ there are configurations $x^i$ and $y^i$ with:
\begin{itemize}
\item ${x^{\vec i}_{|\vec i+E}=x_{|\vec i+E}}$;
\item ${y^{\vec i}_{|\vec i+E}=y_{|\vec i+E}}$;
\item ${F^t(x^{\vec i})=y^{\vec i}}$ for some ${t\geq 0}$.
\end{itemize}

\begin{lemma}[Local reachability lemma]
  \label{lem:localreach}
  Let $G$ be a $\prec$-freezing CA that (context sensitively) simulates a CA $F$, both of dimension $d$. Then there exists a finite set ${E\subseteq\Z^d}$ such that, if a configuration $y$ is $E$-locally reachable by $F$ from a configuration $x$, then ${\overline\phi(y)\prec\overline\phi(x)}$ where $\overline\phi$ is the encoding map given by the simulation as in Definition~\ref{def:contextsensitivesimu}.
\end{lemma}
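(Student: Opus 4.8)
The plan is to show that $E=C$ already works, where $C$ is the context of the simulation $\phi:Q_F^C\to Q_G^B$ (with size-vector $b\in\Z^d$), and to argue cell by cell on the simulator. The argument rests on two elementary facts together with one observation. First, iterating the simulation relation $\overline\phi(F(c))=G^{T}(\overline\phi(c))$ gives $\overline\phi(F^{t}(c))=G^{tT}(\overline\phi(c))$ for every $t\geq 0$ and every $c\in Q_F^{\Z^d}$ (induction on $t$, the case $t=0$ being trivial). Second, since $G$ is $\prec$-freezing, $G^{s}(d)_{\vec w}\prec d_{\vec w}$ for every $s\geq 0$, every $d\in Q_G^{\Z^d}$ and every $\vec w\in\Z^d$ (transitivity of $\prec$ for $s\geq 1$, reflexivity for $s=0$). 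The observation is that, by the very definition $\overline\phi(c)_{b\vec z+\vec r}=\phi(c|_{\vec z+C})_{\vec r}$, the value of $\overline\phi(c)$ at a simulator cell $\vec w$ depends only on the restriction of $c$ to the single window $\vec z+C$, where $\vec z=\lfloor \vec w/b\rfloor$ is the macro-cell index of $\vec w$ and $\vec r=\vec w-b\vec z\in B$.

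The proof then proceeds as follows. Fix an arbitrary simulator cell $\vec w\in\Z^d$ and write $\vec w=b\vec z+\vec r$ with $\vec z=\lfloor\vec w/b\rfloor$ and $\vec r\in B$. Since $y$ is $E$-locally reachable from $x$ with $E=C$, the definition applied at $\vec i=\vec z$ yields configurations $x^{\vec z},y^{\vec z}\in Q_F^{\Z^d}$ with $x^{\vec z}|_{\vec z+C}=x|_{\vec z+C}$, $y^{\vec z}|_{\vec z+C}=y|_{\vec z+C}$, and $F^{t}(x^{\vec z})=y^{\vec z}$ for some $t\geq 0$. By the observation above, $\overline\phi(x)_{\vec w}=\phi(x|_{\vec z+C})_{\vec r}=\phi(x^{\vec z}|_{\vec z+C})_{\vec r}=\overline\phi(x^{\vec z})_{\vec w}$, and likewise $\overline\phi(y)_{\vec w}=\overline\phi(y^{\vec z})_{\vec w}$. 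Chaining the two facts,
\[
\overline\phi(y)_{\vec w}=\overline\phi(y^{\vec z})_{\vec w}=\overline\phi\bigl(F^{t}(x^{\vec z})\bigr)_{\vec w}=G^{tT}\bigl(\overline\phi(x^{\vec z})\bigr)_{\vec w}\prec\overline\phi(x^{\vec z})_{\vec w}=\overline\phi(x)_{\vec w}.
\]
Since $\vec w$ was arbitrary, this means $\overline\phi(y)\prec\overline\phi(x)$, which is the claim.

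There is essentially no hard step here: the whole content is the choice $E=C$ plus the remark that a simulator cell reads exactly one macro-block's window, so local agreement on that window already transports the value of $\overline\phi$. The only points requiring a little care are bookkeeping: that $\vec z$ and $\vec r$ are uniquely recovered from $\vec w$ because $\vec 0\in C$ and $\vec r$ ranges over the rectangular block $B$ of size-vector $b$; that the existential time $t$ in the definition of $E$-local reachability may depend on $\vec i$, which is harmless since a single $\vec i$ (hence a single $t$) is used for each $\vec w$; and that $\prec$ is read as a reflexive partial order, so the degenerate case $t=0$ is covered. If preferred, any finite $E\supseteq C$ works verbatim, which is convenient when this lemma is combined with others that involve a larger neighbourhood.
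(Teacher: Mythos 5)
Your proof is correct and follows essentially the same route as the paper's: local agreement on a finite window transports the value of $\overline\phi$ on the corresponding block, the iterated simulation equation places $\overline\phi(y^{\vec z})$ in the $G$-orbit of $\overline\phi(x^{\vec z})$, and the freezing property of $G$ yields the cellwise comparison, hence ${\overline\phi(y)\prec\overline\phi(x)}$. The only (harmless) difference is that you pin the finite set down as $E=C$ using the explicit form $\overline\phi(c)_{b\vec z+\vec r}=\phi(c|_{\vec z+C})_{\vec r}$, whereas the paper merely asserts the existence of such an $E$.
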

\begin{proof}
  Using the notations of definition~\ref{def:contextsensitivesimu}, there exists some finite set ${E\subseteq\Z^d}$ such that for any ${x,y\in Q_F^d}$ and any ${\vec i\in\Z^d}$ it holds:
  \[x|_{\vec i+E}=y|_{\vec i+E}\Rightarrow \overline\phi(x)|_{b\vec i + B} = \overline\phi(y)|_{b\vec i + B}.\]
  Suppose now that $y$ is $E$-locally reachable from $x$ by $F$. We have in particular ${\overline\phi(y^{\vec i})\prec\overline\phi(x^{\vec i})}$ because $G$ is $\prec$-freezing and the simulation ensures that ${\overline\phi(y^{\vec i})}$ is in the orbit of of ${\overline\phi(x^{\vec i})}$ under $G$ because $y^{\vec i}$ is in the orbit of $x^{\vec i}$ under $F$. From the $E$-locality condition and the remark above we deduce: 
  \[\forall j\in B:\ \overline\phi(y)_{b\vec i + j} \prec \overline\phi(x)_{b\vec i + j}\]
and since the relation holds for any ${\vec i}$ we finally have: ${\overline\phi(y)\prec\overline\phi(x)}$.
\end{proof}

If $F$ is a freezing CA and $c$ is any configuration then the limit ${\displaystyle \lim_{t\to \infty} F^t(c)}$ always exist (in the Cantor topology), is always a fix point, and will be denoted ${F^\infty(c)}$ \cite{GolOlThey15}.

\begin{proof}[Proof of Theorem~\ref{teo: nfum}]
  The case of dimension $1$ is already handled by Theorem~\ref{teo:1Dsim}. We do the proof for ${d=2}$ using $F$ defined above. It is straightforward to extend the argument to higher dimensions. Suppose by contradiction that there is a monotone $\prec$-freezing $G$ that can simulate $F$ and let $E$ be the set given by Lemma~\ref{lem:localreach} for this simulation. Let us define $x^n$, $x^\infty$ and $y^n$, $y^\infty$ as follows (see Figure~\ref{fig:confsy}):
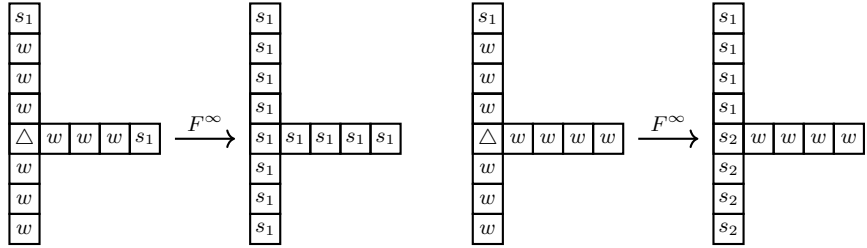
\begin{figure}[H]
  \centering
  \hfill
  \begin{subfigure}[b]{0.45\textwidth}
    \centering
    % !TEX root = ../UFCA.tex
\begin{tikzpicture}[thick,scale=0.4, every node/.style={scale=0.9}]

%%%%%%%%%
\draw  (0,0) rectangle (1,1);
\draw  (2,-4) rectangle (3,-3);
\draw  (1,-4) rectangle (2,-3);
\draw  (4,-4) rectangle (5,-3);
\draw  (3,-4) rectangle (4,-3);
\draw  (0,-1) rectangle (1,0);
\draw  (0,-3) rectangle (1,-2);
\draw  (0,-2) rectangle (1,-1);
\draw  (0,-5) rectangle (1,-2);
\draw  (0,-4) rectangle (1,-3);
\draw  (0,-7) rectangle (1,-6);
\draw  (0,-6) rectangle (1,-5);
\node at (0.5,0.5) {$s_1$};
\node at (0.5,-0.5) {$w$};
\node at (0.5,-1.5) {$w$};
\node at (0.5,-2.5) {$w$};
\node at (0.5,-3.5) {$\triangle$};
\node at (0.5,-4.5) {$w$};
\node at (0.5,-5.5) {$w$};
\node at (0.5,-6.5) {$w$};

\node at (1.5,-3.5) {$w$};
\node at (2.5,-3.5) {$w$};
\node at (3.5,-3.5) {$w$};
\node at (4.5,-3.5) {$s_1$};
%%%%%%%%%
\draw  (8,0) rectangle (9,1);
\draw  (10,-4) rectangle (11,-3);
\draw  (9,-4) rectangle (10,-3);
\draw  (12,-4) rectangle (13,-3);
\draw  (11,-4) rectangle (12,-3);
\draw  (8,-1) rectangle (9,0);
\draw  (8,-3) rectangle (9,-2);
\draw  (8,-2) rectangle (9,-1);
\draw  (8,-5) rectangle (9,-2);
\draw  (8,-4) rectangle (9,-3);
\draw  (8,-7) rectangle (9,-6);
\draw  (8,-6) rectangle (9,-5);
\node at (8.5,0.5) {$s_1$};
\node at (8.5,-0.5) {$s_1$};
\node at (8.5,-1.5) {$s_1$};
\node at (8.5,-2.5) {$s_1$};
\node at (8.5,-3.5) {$s_1$};
\node at (8.5,-4.5) {$s_1$};
\node at (8.5,-5.5) {$s_1$};
\node at (8.5,-6.5) {$s_1$};

\node at (9.5,-3.5) {$s_1$};
\node at (10.5,-3.5) {$s_1$};
\node at (11.5,-3.5) {$s_1$};
\node at (12.5,-3.5) {$s_1$};
%%%
\draw[->] (5.5,-3.5) -- (7.5,-3.5);
\node at (6.5,-3) {$F^\infty$};
\end{tikzpicture}
    \caption{Configuration $y^n$ with ${n=K=3}$.}\label{fig: 2 sig}
  \end{subfigure}
  \hfill
  \begin{subfigure}[b]{0.45\textwidth}
    \centering
    % !TEX root = ../UFCA.tex
\begin{tikzpicture}[thick,scale=0.4, every node/.style={scale=0.9}]

%%%%%%%%%
\draw  (0,0) rectangle (1,1);
\draw  (2,-4) rectangle (3,-3);
\draw  (1,-4) rectangle (2,-3);
\draw  (4,-4) rectangle (5,-3);
\draw  (3,-4) rectangle (4,-3);
\draw  (0,-1) rectangle (1,0);
\draw  (0,-3) rectangle (1,-2);
\draw  (0,-2) rectangle (1,-1);
\draw  (0,-5) rectangle (1,-2);
\draw  (0,-4) rectangle (1,-3);
\draw  (0,-7) rectangle (1,-6);
\draw  (0,-6) rectangle (1,-5);
\node at (0.5,0.5) {$s_1$};
\node at (0.5,-0.5) {$w$};
\node at (0.5,-1.5) {$w$};
\node at (0.5,-2.5) {$w$};
\node at (0.5,-3.5) {$\triangle$};
\node at (0.5,-4.5) {$w$};
\node at (0.5,-5.5) {$w$};
\node at (0.5,-6.5) {$w$};

\node at (1.5,-3.5) {$w$};
\node at (2.5,-3.5) {$w$};
\node at (3.5,-3.5) {$w$};
\node at (4.5,-3.5) {$w$};
%%%%%%%%%
\draw  (8,0) rectangle (9,1);
\draw  (10,-4) rectangle (11,-3);
\draw  (9,-4) rectangle (10,-3);
\draw  (12,-4) rectangle (13,-3);
\draw  (11,-4) rectangle (12,-3);
\draw  (8,-1) rectangle (9,0);
\draw  (8,-3) rectangle (9,-2);
\draw  (8,-2) rectangle (9,-1);
\draw  (8,-5) rectangle (9,-2);
\draw  (8,-4) rectangle (9,-3);
\draw  (8,-7) rectangle (9,-6);
\draw  (8,-6) rectangle (9,-5);
\node at (8.5,0.5) {$s_1$};
\node at (8.5,-0.5) {$s_1$};
\node at (8.5,-1.5) {$s_1$};
\node at (8.5,-2.5) {$s_1$};
\node at (8.5,-3.5) {$s_2$};
\node at (8.5,-4.5) {$s_2$};
\node at (8.5,-5.5) {$s_2$};
\node at (8.5,-6.5) {$s_2$};

\node at (9.5,-3.5) {$w$};
\node at (10.5,-3.5) {$w$};
\node at (11.5,-3.5) {$w$};
\node at (12.5,-3.5) {$w$};
%%%
\draw[->] (5.5,-3.5) -- (7.5,-3.5);
\node at (6.5,-3) {$F^\infty$};
\end{tikzpicture}
    \caption{Configuration $y^\infty$ with ${K=3}$.}\label{fig: 1 sig}
  \end{subfigure}%
  % \hfill
  % \begin{subfigure}[b]{0.45\textwidth}
  %   \centering
  %   \include{./img/nosimtotbM}
  %   \caption{Simulation of $F$ by $U$ starting from the configuration in the figure \ref{fig: 1 sig}  and the limit configuration.}\label{fig: macro 1 sig}
  % \end{subfigure}%
  % \hfill  
  % \\
  % \hfill
  % \begin{subfigure}[b]{0.45\textwidth}
  %   \centering
  %   \include{./img/nosimtotc}
  %   \caption{Configuration with a signal $s_1$ overwrite by a signal $s_2$ and their limit configuration.}\label{fig: over signal}
  % \end{subfigure}%
  % \hfill
  % \begin{subfigure}[b]{0.45\textwidth}
  %   \centering
  %   \include{./img/nosimtotcM}
  %   \caption{Simulation of $F$ by $U$ starting from the configuration in the figure \ref{fig: over signal}  and the limit configuration.}\label{fig: macro over signal}
  % \end{subfigure}%
  % \hfill               
  \caption{Configurations $y^n$ and $y^\infty$ and the limit fixed point reached under $F$.}\label{fig:confsy}
\end{figure}
      \[x^n(\vec i) =
      \begin{cases}
        s_2&\text{ if }\vec i=(0,b)\text{ with }b>-n\\
        s_1&\text{ if }\vec i=(0,b)\text{ with }b\leq -n\\
        0&\text{ else.}
      \end{cases}
      \]
      \[y^n(\vec i) =
      \begin{cases}
        s_1&\text{ if }\vec i=(0,b)\text{ with }b>K\\
        w&\text{ if }\vec i=(0,b)\text{ with }b\leq K\text{ and }b\neq 0\\
        \triangle&\text{ if }\vec i=(0,0)\\
        s_1&\text{ if }\vec i=(a,0)\text{ with }a>n\\
        w&\text{ if }\vec i=(a,0)\text{ with }a\leq n\text{ and }a>0\\
        0&\text{ else.}
      \end{cases}
      \]
  where $K$ is a large enough constant (compared to $E$) and $x^\infty$ (resp. $y^\infty$) is the limit of $x^n$ (resp. $y^n$) when $n$ goes to $\infty$. Choosing ${n=K}$ large enough (compared to $E$) it is straightforward to check that $x^\infty$ (resp. $y^n$) is $E$-locally reachable from $x^n$ (resp. $y^\infty$). From Lemma~\ref{lem:localreach}, we deduce that ${\overline\phi(x^\infty)\prec\overline\phi(x^n)}$ and ${\overline\phi(y^n)\prec\overline\phi(y^\infty)}$ where $\overline\phi$ is the encoding map involved in the simulation of $F$ by $G$. Thus we also have ${G^\infty\bigl(\overline\phi(y^n)\bigr)\prec G^\infty\bigl(\overline\phi(y^\infty)\bigr)}$ (by monotonicity of $G$), which is equivalent to ${\overline\phi(F^\infty(y^n))\prec\overline\phi(F^\infty(y^\infty))}$ by the simulation. Denoting by $P$ the half-plane of all positions ${(a,b)}$ with ${b<-n}$, we have ${F^\infty(y^\infty)_{|P} = x^\infty_{|P}}$ and ${F^\infty(y^n)_{|P} = x^n_{|P}}$. This translates by the simulation into ${\overline\phi(F^\infty(y^\infty))_{|P'} = \overline\phi(x^\infty)_{|P'}}$ and ${\overline\phi(F^\infty(y^n))_{|P'} = \overline\phi(x^n)_{|P'}}$ where $P'$ is some half-plane contained in $P$ (depending on parameters of the simulation). We reached a contradiction because the left-hand terms and the right-hand terms of this pair of equality compare differently with respect to $\prec$ as established above, and they cannot be all equal because ${\overline\phi(x^n)}$ is distinct from ${\overline\phi(x^\infty)}$ over $P'$ by injectivity of $\overline\phi$.
\end{proof}

\end{document}